\documentclass{article}
\usepackage{ijcai26}

\usepackage{times}
\usepackage{soul}
\usepackage{url}
\usepackage[hidelinks]{hyperref}
\usepackage[utf8]{inputenc}
\usepackage[small]{caption}
\usepackage{graphicx}
\usepackage{amsmath}
\usepackage{amsthm}
\usepackage{booktabs}
\usepackage{algorithm}
\usepackage{algorithmic}
\usepackage[switch]{lineno}

\usepackage{times}
\usepackage{soul}
\usepackage{url}
\usepackage[hidelinks]{hyperref}
\usepackage[utf8]{inputenc}
\usepackage{amsmath}
\usepackage{amsthm}
\usepackage{booktabs}
\usepackage[switch]{lineno}
\usepackage[dvipsnames]{xcolor}
\usepackage{xspace}

\usepackage{hyperref}
\usepackage{url} 
\hypersetup{
  breaklinks=true,
}

\usepackage{amssymb}
\usepackage{tikz}
\usetikzlibrary{positioning}
\usetikzlibrary{chains,fit,shapes}

\tikzset{%
round/.style={circle, draw=black,fill=gray!5, very thick,minimum size=10mm,},
dot/.style={draw, circle, minimum size=2mm,inner sep=0pt,outer sep=0pt,fill=black},% and so on
}

\newtheorem{theorem}{Theorem}
\newtheorem{definition}{Definition}
\newtheorem{lemma}{Lemma}

\newcommand{\cA}{\mathcal{A}}
\newcommand{\cM}{\mathcal{M}}
\newcommand{\cD}{\mathcal{D}}

\newcommand{\cF}{\mathcal{F}}
\newcommand{\cS}{\mathcal{S}}

\newcommand{\cV}{\mathcal{V}}
\newcommand{\cR}{\mathcal{R}}

\newcommand{\cC}{\mathcal{C}}

\newcommand{\cT}{\mathcal{T}}

\newcommand{\cI}{\mathcal{I}}
\newcommand{\cO}{\mathcal{O}}

\newcommand{\cJ}{\mathcal{J}}

\newcommand{\dtime}{\mathsf{DTIME}}
\newcommand{\ntime}{\mathsf{NTIME}}

\newcommand{\leftmarker}{\triangleright}
\newcommand{\rightmarker}{\triangleleft}

\newcommand{\ConfCM}{\ensuremath{\mathit{Conf}^{CM}}\xspace}
\newcommand{\ConfLBA}{\ensuremath{\mathit{Conf}^{LBA}}\xspace}
\newcommand{\ConfTM}{\ensuremath{\mathit{Conf}^{TM}}\xspace}
\newcommand{\deltaFinal}{\ensuremath{\delta^{F}}}

\newtheorem*{definition*}{Definition}
\usepackage{thmtools}
\usepackage{thm-restate}

\usepackage{todonotes}
\newcommand\mg[1]{\todo[color=green!30,size=\small,inline]{\textbf{Maksim}: #1}}

\newcommand\natasha[1]{\todo[color=cyan!30,size=\small,inline]{\textbf{Natasha}: #1}}
\newcommand\bsl[1]{\todo[color=red!30,size=\small,inline]{\textbf{Brian}: #1}}

\title{Temporal Causal Models as a Model of Computation}

\author{%
Maksim Gladyshev$^1$\and
Natasha Alechina$^{2, 1}$\and
Brian Logan$^{3, 1}$\\
\affiliations
$^1$Utrecht University, The Netherlands\\
$^2$Open Universiteit, The Netherlands\\
$^3$University of Aberdeen, UK\\
\emails
\{m.gladyshev, n.a.alechina, b.s.logan\}@uu.nl
}

\begin{document}

\maketitle

\begin{abstract}
Causal models, also known as Structural Equation Models (SEM), are a well-known formalism for representing and reasoning about causal dependencies between events. In this paper, we show that Temporal SEMs (TSEMs), which extend SEMs to support causal reasoning in temporal settings, can be interpreted as a model of computation. We prove that TSEMs can encode Linear Bounded Automata, and thus causal settings representable in context sensitive languages. We also prove that TSEMs with countably many variables are Turing complete. These results establish a formal connection between causal reasoning and classical models of computation, enabling the integration of counterfactual reasoning techniques from causal inference into the theory of computation.
% Causal models, also known as Structural Equation Models (SEM), are a well-known formalism for representing and reasoning about causal dependencies between events. In this paper, we show that Temporal SEM (TSEM), which allow reasoning about causality in temporal settings, can be used as a model of computation. We prove that TSEMs can encode Linear Bounded Automata and thus causal settings representable in context sensitive languages. We also prove that TSEMS with countably many variables are Turing complete, and sufficiently expressive for reasoning about computations in both deterministic and non-deterministic settings.  %bsl: The following needs revised
% These results allow us to enrich the Theory of Computation with counterfactual reasoning techniques developed in the field of causal reasoning.

\end{abstract}

\section{Introduction}
%bsl: replaced with the introduction from the abandoned AAAI 2026 paper.

% Possibly relevant papers: \cite{Scheines2002} \cite{Collier2011} \cite{Cooper_Glymour_1999} 

Causal models in the form of Structural Equation Models (SEM) \cite{Pearl2000,Spirtes2001,HalpernBook} have become the dominant AI framework for representing and analysing causal relations between events of interest. These models represent events as the assignment of values to variables, and encode causal dependencies between variables using  structural equations which define how the value of each variable depends on the values of other variables. Being a relatively simple yet powerful formalism, they have found multiple applications in different disciplines, including Economics \cite{NBERw29787}, Medicine \cite{Gunzler_book} and Machine Learning \cite{Albahri2022}. 

Causal frameworks can also be used to causally model and understand the temporally extended behaviour of natural systems or temporally extended processes. In many applications we are interested in reasoning about not only whether ``event A causes event B" but also whether ``event A happening at time $t_1$ causes event B at time $t_2$", and a range of approaches to integrating temporal and causal reasoning have been proposed, based both on SEMs  e.g., \cite{bongers2022causalmodelingdynamicalsystems,Assaad_2022,boeken2024dynamicstructuralcausalmodels,Gladyshev_Alechina_Dastani_Doder_Logan_2025,Cinquini_Beretta_Ruggieri_Valera_2025}, and on alternative frameworks, e.g., \cite{CONSOLE19931,Lorini01102013,Batusov_Soutchanski_2018,Ciuni2018,Finkbeiner2024}.

In this paper we argue that temporal causal models are not only an effective tool for reasoning about causal structures underlying physical, biological or social processes, but can also be viewed as a useful formalism for representing computing devices. Although other temporal formalisms, for example extensions of the logic LTL \cite{mukherjee2012synchronizing}, can be used to reason about computations, temporal causal models allow us to use techniques specifically developed for causal inference, e.g., counterfactual reasoning, to model and understand the dynamics of program behaviour. 
%bsl: revised
We show that Temporal Structural Equation Models (TSEM) \cite{Gladyshev_Alechina_Dastani_Doder_Logan_2025} can encode Linear Bounded Automata and thus causal settings representable in context sensitive languages. 
We also prove that TSEMS with countably many variables are Turing complete, and sufficiently expressive for reasoning about computations in both deterministic and non-deterministic settings.
%
% The implications of this are twofold. 

% First, it demonstrates that  TSEMs are as expressive as Turing machines, in the sense that all behaviours representable by a Turing machine can be represented by a TSEM. Second, it shows that they can be used as theoretical models of computation. 
%and, thus, can be used as theoretical models of computation. 

Understanding the causal evolution of computational processes over time is important for computability theory, and we argue that TSEMs provide a useful framework to represent not only the sequence of computational states, but also the causal dependencies between state transitions. This allows us to analyse how specific inputs or intermediate steps causally affect computational outcomes. For example, TSEMs may be used to augment formal verification techniques to determine why certain errors occurred in a system, and what could prevent them. 

The contributions of this paper are as follows:
\begin{enumerate}
    \item We introduce a non-deterministic extension of TSEMs and define %two types of 
    interventions on TSEMs in Section \ref{sec:temporalmodels}.
    \item We prove that TSEMs are as expressive as Linear Bounded Automata in Section \ref{sec:LBA}.
    \item We prove that TSEMs with countably many variables are as expressive as Turing machines in Section \ref{sec:TM}.
    \item We discuss how interventions on TSEMs can be used to answer computational questions in Section \ref{sec:interventions}.
\end{enumerate}

% The paper is organized as follows. In Section \ref{sec:temporalmodels} we introduce temporal causal models framework. In Section \ref{sec:LBA} we present our main result proving that TSEMs can simulate LBAs. In Section \ref{sec:TM} we extend this result proving that TSEMs with countably many variables are Turing complete. Then, we define several types of causal interventions suitable for the considered temporal settings in Section \ref{sec:interventions}. We advocate the use of TSEMs as a model of computation and argue that enriching the Theory of Computation with counterfactual reasoning techniques  by means of causal interventions has multiple important applications. Finally, we conclude and discuss potential directions for future work in Section \ref{sec:conclusion}. 

% \mg{Maybe to omit either contributions or the paper structure to save space?}

%\maksim{the paper to mention somewhere: \cite{Asparouhov04052018}}

% !TEX root = ijcai26.tex

\section{Temporal Causal Models} 
\label{sec:temporalmodels}

%bsl: revised for non-determinism and readability; needs checked for correctness
In this section, we introduce the main components of Temporal Structural Equation Models (TSEM). As we aim to encode nondeterministic Linear Bounded Automata and nondeterministic Turing Machines, we focus on \emph{nondeterministic} causal models in which a set of events may result in more than one outcome.
%The presentation follows that in \cite{Gladyshev_Alechina_Dastani_Doder_Logan_2025}. 
%MG: "a set of events may result in more than one outcome" is a bit unclear to me, I'd reformulate

Let $\cV = \{X_1, X_2, \dots\}$ be a finite set of variables. 
%MG: to extend the set to countable in Sec. 4.
%\footnote{In Section \ref{sec:countable } we consider TSEMs over a countable $\cV$.} 
A \emph{range function} $\cR$  associates with every variable $X_i\in \mathcal{V}$ a non-empty \textit{finite} set $\mathcal{R}(X_i)=\{x_1, \dots, x_k\}$ of possible values. The values of variables may change over time, and these changes are governed by \emph{temporal structural equations} which describe how the current value of each $X_i\in \cV$ depends on the previous values of variables in $\cV$. A \textit{domain function} $\cD: \cV \longrightarrow 2^{\cV}$ maps each variable $ Y \in \cV$ to a \emph{finite} subset of variables, intuitively those variables whose values determine the possible next values of $Y$.
%MG: add a remark about 1-step equations in Def.1

A causal model $\cM$ over a signature $\cS= (\cV, \cR, \cD)$ consists of a set of \emph{structural equations} $\cF = \{\cF_Y\mid Y\in \cV\}$ specifying how the next values of $Y$ depends on the previous values of the variables in its domain $\cD(Y)$. Formally:

\begin{definition}[Temporal Structural Equation Model]\label{def:SEM} A (nondeterministic) Temporal Structural Equation Model (TSEM), over a signature $\mathcal{S}$ is a tuple $\cM = (\mathcal{S}, \cF)$, where $\cF$ associates with every variable $Y \in \mathcal{V}$ a function 
$$\mathcal{F}_Y:\prod\limits_{X \in \cD(Y)}\cR(X)\longrightarrow 2^{\cR(Y)}$$
which defines the structural equation describing how the (next) value of $Y$ depends on the (previous) values of the variables in its domain $\cD(Y)$.\footnote{We abuse notation and interpret elements of $\prod\limits_{X\in \{X_1, \dots, X_k\}}\cR(X)$ both as vectors of values $(x_1,\dots,x_k)$ and as vectors of assignments $(X_1=x_1,\dots,X_k=x_k)$.} A TSEM is deterministic if all structural equations always return a single value. 
\end{definition}

%bsl: revised for readability
The changes in the values of variables induced by a set of structural equations are described by a computation. A \emph{computation} $\cC$ is a tree  of \emph{configurations} of $\cM$, i.e., complete assignments of values to variables $\cV$. We denote a complete assignment to $\cV$ by $\vec{v}$, and use $\vec{v}^{|X}$ to denote $\vec{v}$ restricted to $X \in \cV$; we also denote $\vec{v}$ restricted to the variables in the domain of $Y$ as  $\vec{v}^{|\cD(Y)}$. Let $\ConfCM(\cM)$ be the set of all possible configurations of a causal model $\cM$. 

%bsl: linear computations ...
% \begin{definition}[Computation]\label{def:computation} Given a TSEM $\cM$ and an initial configuration $\vec{v_0}$, a computation $\cC$ is a function 
% $$\cC: \mathbb{N}\longrightarrow \prod\limits_{X\in \cV} \cR(X)$$
% mapping $\mathbb{N}$ to configurations of $\cV$, such that $\cC(0)=\vec{v_0}$ and for all $i>0$: 
% $$\cC(i) = \prod\limits_{X\in \cV} \cF_X\bigl(\cC(i-1)^{|\cD(X)}\bigr)$$
% where by $\cC^{|\cD(X)}(i-1)$ we mean the configuration $\cC(i-1)$ restricted to the variables in the domain of $X$. 
% \end{definition}

% Intuitively, a computation $\cC$ proceeds from a given initial configuration $\vec{v_0}$ and consecutively `updates' the values of $\cV$ according to $\cF$. If $\cC(i)=\cC(i+1)$ for some $i$, then $\cC(i)=\cC(i+j)$ for any $j>0$, and we say that $\cC$ \emph{terminates} after $i$ steps. 

% Then $\cM$ can be represented as a set of binary relations $\twoheadrightarrow_{\cM}\subseteq Conf^{CM}(\cM)\times Conf^{CM}(\cM)$, where $\vec{v_1}\twoheadrightarrow_{\cM}\vec{v_2}$
% %$\vec{v_1}, \vec{v_2}\in Conf^{CM}(\cM)$ 
% means that $\vec{v_2} = \prod\limits_{X\in \cV} \cF_X\bigl(\vec{v_1}^{|\cD(X)}\bigr)$, i.e., configuration $\vec{v_2}$ is a possible outcome of configuration $\vec{v_1}$ in a single step.
% \mg{to change: $\vec{v_2} = \prod\limits_{X\in \cV} \cF_X\bigl(\vec{v_1}^{|\cD(X)}\bigr)$ above, or just remove since Def. 2 explains this again}
%where by $\vec{v_1}^{|\cD(X)}$ we mean the configuration $\vec{v_1}$ restricted to the variables in the domain of $X$. 
%Formally:

\begin{definition} [Computation]\label{def:computation}
A computation $\cC$ for a TSEM $\cM$ with initial assignment $\vec{v_0}$ is a tree $\cT^{\cC}$, whose nodes are elements of $\ConfCM(\cM)$ with root node $\vec{v_0}$, and $\vec{v_1} \twoheadrightarrow_{\cM} \vec{v_2}$ iff\ $\forall X\in\cV, \vec{v_2}^{|X}\in \cF_X(\vec{v_1}^{|\cD(X)})$.
\end{definition}

%As the simplest illustration, consider the deterministic model $\cM_1$ which contains a single variable $X$ with a binary range $\cR(X) \mathbin{=} \{0, 1\}$, such that $\cF_X(X\mathbin{=}0) \mathrel{:=} 1$ and $\cF_X(X=1):=0$. Thus, for an initial configuration $\vec{v} = (X=0)$ it generates a computation $\cC_1$, in which $X$ repeatedly `flips' its value: $\cC_1 = (X=0, X=1, X=0, \dots)$ and never terminates. 

As a simple illustration, consider a model $\cM$ with a single variable $X$: $\cV=\{X\}$, such that $\cR(X)= \{0, \dots, 9\}$, $\cD(X) = \{ X \}$, and a single structural equation defined as follows: 
$$\cF_X (X=x)=\begin{cases}
    9, \text{ if } x=9\\
    \{0, x+1\}, \text{ if } x\neq 9\\
\end{cases}$$ 
If the value of $X$ on the previous step of the computation is equal to 9, it remains so on the current step. However, if the previous value $x$ of $X$ is less than 9, $\cF_X$ makes a non-deterministic choice: either to continue counting and increase the value of $X$ to $x+1$ or to reset $X$ to 0.

One of the key features of causal models is that they allow \emph{counterfactual reasoning}, that is, they allow us to reason not only about the %cause(s) of a particular event
actual sequence of events, but also about hypothetical counterfactual sequences of events by means of so-called \emph{interventions}. An intervention is an operation on causal model.
%
%We distinguish two types of interventions: \emph{value} interventions and \emph{structure} interventions. Value interventions set the values of some variables during the computation, but do not affect structural equations, while structure interventions change the structural equations, but do not necessarily set variables' values. 

An atomic
%atomic \emph{value} 
intervention is denoted by $do(Y^n\gets y)$, where $Y\in\cV, n\in\mathbb{N}, y\in\cR(Y)$, says that the value of the variable $Y$ must be fixed to $y$ at the $n$th step of the computation.
An 
%A value 
intervention 
$do(\vec{Y}^{\vec{n}}\leftarrow \vec{y}) = do(Y_1^{n_1}\gets y_1, \dots, Y_i^{n_j}\gets y_i)$ is a vector of atomic %value
interventions. Here we assume that if $Y_k^n\gets y_k$ and $Y_l^n\gets y_l$ occur in $\vec{Y}^{\vec{n}}\leftarrow \vec{y}$, then $Y_k\neq Y_l$. This intervention results in an updated computation tree $\cC^{\vec{Y}^{\vec{n}}\leftarrow \vec{y}}$, defined for each computation branch as follows. Given a default assignment $\vec{v}_0$, let $\vec{v}^{I}_0$ under intervention $I=do(\vec{Y}^{\vec{n}}\leftarrow \vec{y})$ be an assignment of $\cV$, which agrees with $\vec{v}_0$ everywhere, except the variables $Y_i$, such that $Y_i^0\leftarrow y$ occurs in  $\vec{Y}^{\vec{n}}\leftarrow \vec{y}$. The values of those variables in $\vec{v}^{I}_0$ are set according to $\vec{Y}^{\vec{n}}\leftarrow \vec{y}$. We write $\cC(i)|_X$ to refer to $X$'s value at the $i$'th step of $\cC$.
This means that we set a value of $X$ on all branches of $\cC$.

%\mg{The interventions are defined for deterministic computation now}

%\begin{definition}[Value Intervention] 
\begin{definition}[Intervention]\label{def:intervention}
    Given an intervention $I=do(\vec{Y}^{\vec{n}}\leftarrow \vec{y})$ and $(\cM, \vec{v_0})$, an \emph{updated} computation tree $\cC^{I}$ is defined as, for every branch $\cC_b$ in the tree, $\cC^{I}_b(0)=\vec{v_0}^{I}$, and $\forall i>0,\, \forall X\in \cV$: 
$$\cC^{I}_b(i)|_X=\begin{cases}
			x', \text{ if } X^i\gets x' \in \vec{Y}^{\vec{n}}\leftarrow \vec{y}\\
            \cF_X(\cC^{I}_b(i-1))^{|\cD(X)}, \text{ otherwise}
		 \end{cases}$$
\end{definition}

Interventions can be used to define a \emph{cause} of a particular state of affairs. Since we consider a temporal setting, as in \cite{Gladyshev_Alechina_Dastani_Doder_Logan_2025}, we focus on the notion of a \emph{cause at step $n$} rather than cause in general. 
%We follow the definitions in \cite{Gladyshev_Alechina_Dastani_Doder_Logan_2025} to define a cause. 
Since we are 
in non-deterministic setting, we follow \cite{beckers2025nondeterministic} to define a cause in the strong sense that the outcome could have been prevented on all branches of the computation tree.

\begin{definition}\label{def:cause}
Given a TSEM $\cM$, its initial configuration $\vec{v}_0$ and two assignments of variables $\vec{X}^{\vec{t}}=\vec{x} = (X_1^{t_1}=x_1, \dots, X_n^{t_n}=x_n)$ and $\vec{Y}^{\vec{t'}}=\vec{y} = (Y_1^{t_1'}=y_1, \dots, Y_k^{t_k'}=y_k')$ at the corresponding time steps $t_1, \dots, t_n$ and $t_1', \dots, t_k'$. We say that  $\vec{X}^{\vec{t}}=\vec{x}$ is the \emph{cause} of $\vec{Y}^{\vec{t'}}=\vec{y}$ in $(\cM, \vec{v}_0)$ iff \\ 
(1) both $\vec{X}^{\vec{t}}=\vec{x}$ and $\vec{Y}^{\vec{t'}}=\vec{y}$ hold in $\cC$ for $(\cM, \vec{v}_0)$;\\
(2) there exists an intervention $I = do(\vec{X}^{\vec{t}}\gets \vec{x}')$ with $\vec{x}\neq \vec{x'}$, such that $\vec{Y}^{\vec{t'}}=\vec{y}$ does not hold on any branch in $\cC^I$;\\
(3) no proper subset of $\vec{X}^{\vec{t}}=\vec{x}$ satisfies condition (2). 
\end{definition}

The definition of causal models we present here differs from the standard definition of static SEM, e.g., \cite{HalpernBook}, where a signature usually contains two types of variables: exogenous variables (whose values are determined outside of the model) and endogenous variables (whose values depend on the values of other variables). However, this distinction is not required in our setting, and we treat all the variables as endogenous. 

%\bsl{Add a discussion of non-standard interventions.}

Following \cite{Gladyshev_Alechina_Dastani_Doder_Logan_2025}, we also interpret structural equations as temporal (or time-lagged), and call these models \textit{temporal} SEMs. While in general TSEMs allow equations with arbitrary long delays (time lags) \cite{gladyshev2025temporal}, to simplify notation we restrict consideration to TSEMs with 1-step delays as these are sufficient for the proofs presented below. Lastly, while  non-deterministic extensions of static SEMs have been considered in \cite{beckers2025nondeterministic,beckers2025actual} and \cite{Barbero2024}, to the best of our knowledge, temporal extensions of non-deterministic SEMs have not been studied before.

% !TEX root = ijcai26.tex

\section{Encoding Linear Bounded Automata}\label{sec:LBA}

In this section, we show how TSEMs can be used to encode Linear Bounded Automata.

A \emph{linear bounded automaton} (LBA) is a restricted form of a Turing Machine in which the tape is finite and bounded by left and right endmarkers, and where transitions may not overwrite the endmarkers or to move to the left of the left endmarker or to the right of the right endmarker. 
%bsl: we define deterministic LBAs below.
Non-deterministic LBAs accept the class of context-sensitive languages, that is, problems which can be expressed in a context sensitive grammar \cite{Hopcroft/Ullman:79a}. An LBA accepts a grammar if for each string in the language it accepts with a tape that is bounded by a linear function of the length of the input string. 
%the string's length multiplied by a constant. 

% \begin{definition}[LBA]\label{def:comp:tm} An LBA is a tuple $T = (Q, q_0, F, \Gamma, \Sigma, \#, \delta)$, where
% \begin{itemize}
%     \item $Q$ is a finite set of states,
%      \item $q_0\in Q$ is the initial state,
%     \item $F\subset Q$ is the set of finial (or accepting) states,
%      \item $\Gamma$ is a finite set of tape symbols,
%     \item $\Sigma\subseteq\Gamma$ is a finite set of input symbols,
%     \item $\#\in \Gamma$ is the blank symbol, such that $\#\notin \Sigma$, i.e. $\#$ is not an input symbol, 
%     \item $\leftmarker \in \Gamma$ and $\rightmarker \in \Gamma$
%     are left endmarker and right endmarker respectively, such that
%     $\leftmarker \notin \Sigma$ and $\rightmarker \notin \Sigma$,
%     \item $\delta: (Q\setminus F)\times \Gamma \longrightarrow Q\times\Gamma \times \{Left, Right\}$ is a function mapping a state $q\in (Q\setminus F)$ and a symbol $\gamma\in \Gamma$ to a tuple $\delta(q, \gamma) = (q', \alpha, D)$, where $q'$ is the next state, $\alpha\in \Gamma$ is a symbol that must be written in the cell and $D\in \{Left, Right\}$ is the direction in which the head moves
%     \item constraint on $\delta$ \natasha{will cite from Hopcroft/Ullman, but for now:} if $\gamma$ is $\leftmarker$ then $\alpha=\leftmarker$ and $D \not = Left$ and if $\gamma$ is $\rightmarker$ then $\alpha=\rightmarker$ and $D \not = Right$.
% \end{itemize}
% \end{definition}

There are several definitions of LBAs in the literature. The definition below is based on that in \cite{Simovici/Tenney:99a} as it allows a simpler encoding as a causal model. 
\begin{definition}[LBA]\label{def:comp:tm} A (nondeterministic) LBA is a tuple $\cA = (Q, q_0, F, \Sigma, \Gamma, \delta)$, where
\begin{itemize}
    \item $Q$ is a finite set of states with $q_0\in Q$ the initial state;
    \item $F\subset Q$ is the set of final (or accepting) states;
    \item $\Sigma$ is a finite set of input symbols;
    \item $\Gamma = \Sigma \cup \{\#, \leftmarker, \rightmarker \}$ is the finite set of tape symbols, where $\#$ is the blank symbol, $\leftmarker$ and $\rightmarker$ the left and right endmarkers, and $\{\#, \leftmarker, \rightmarker\} \cap \Sigma = \emptyset$;
    \item $\delta: (Q\setminus F)\times \Gamma \times Q\times\Gamma \times \{-1, 0, 1\}$ is a transition relation mapping a state $q\in (Q\setminus F)$ and a symbol $\gamma\in \Gamma$ to a set of tuples     $(q', \gamma', d)$, where $q'$ is the next state, $\gamma' \in \Gamma$ is the symbol written in the current cell and $d\in \{-1,0, 1 \}$ is the direction in which the head moves ($-1$ indicates the head moves to the left etc.). 
%bsl: redefined to allow transitions in final states
%    \item $\delta: Q \times \Gamma \times Q\times\Gamma \times \{-1, 0, 1\}$ is a transition relation mapping a state $q\in Q$ and a symbol $\gamma\in \Gamma$ to a set of tuples $(q', \gamma', x)$, where $q'$ is the next state, $\gamma' \in \Gamma$ is the symbol written in the current cell and $d\in \{-1,0, 1 \}$ is the direction in which the head moves ($-1$ indicates the head moves to the left etc.). 
\end{itemize}
\end{definition}
\noindent
The transition relation $\delta$ is constrained such that, if $\gamma$ is $\leftmarker$ then $\gamma' = \leftmarker$ and $d \not = -1$ and if $\gamma$ is $\rightmarker$ then $\gamma' = \rightmarker$ and $d \not = 1$, i.e., the endmarkers cannot be overwritten and transitions cannot move to the left of the left endmarker or to the right of the right endmarker. 
%bsl: do we also need to stipluate that no transition can write an endmarker symbol bewtween the endmarkers?
Note also that $\delta$ is defined only for states in $Q \setminus F$, i.e., computation stops when a final state is reached.

We say that an LBA has tape of length $n$ if the number of cells between $\leftmarker$ and $\rightmarker$, not including the cells with the endmarkers, is $n$.
%To simplify notation, we will write %$\mathit{L}=  
%$Dir(q, \gamma)$  for the set of possible values for a direction, %$\alpha= 
%$Sym(q, \gamma)$  for the set of possible values for a symbol) and %$q' = 
%$Out(q, \gamma)$  for the set of possible values for the next state.
%
We represent a \emph{configuration} of an LBA with a tape of length $n$ as a string 
$$\lambda = C_{0} C_{1}  \ldots q C_i \ldots C_{n} C_{n + 1}$$ 
where $q$ is the state of the LBA, the tape head is above cell $C_i$, $C_{0}=\leftmarker$, $C_{n + 1}=\rightmarker$, and each $C_k, 1 \leq k \leq n$ represents a symbol written in the $k$-th cell of the tape.
%bsl: needs checked
Without loss of generality, we assume that, in the initial configuration, the head is above cell $C_0$, and the string of input symbols $s = \sigma_1, \ldots, \sigma_k$ are in the cells $C_1, \ldots C_{k}$. 

\begin{definition} Let $\ConfLBA({\cal A})$ be the set of all possible configurations of an LBA ${\cal A}$. Then, a run of ${\cal A}$ on input $s$ is a tree $\cT$. The nodes of $\cT$ are elements of $\ConfLBA({\cal A})$ with a root node $\lambda_s$ representing the initial configuration of $\cal A$ for input $s$, and $\lambda \twoheadrightarrow_{\cal A}\lambda' \in \cT$ iff $\lambda'$ is a possible result of transition from $\lambda$ according to $\delta$.
\end{definition}
An LBA \emph{accepts} a string $s$ iff the computation tree $\cT$ for the input $s$ contains a sequence of valid transitions $\lambda_s \twoheadrightarrow_{\cal A} \dots \twoheadrightarrow_{\cal A}\lambda'$ from the root node $\lambda_s$ to some accepting configuration $\lambda' = C_{0} \ldots q^F \ldots C_{n + 1}$ where the LBA is in a final state  (i.e., $q^F \in F$).
%A configuration $\lambda' = C_{0} C_{1} \ldots C_{n} q^F C_{n + 1}$ is accepting if the LBA is in a final state  (i.e., $q^F \in F$) and the head is above the right endmarker. 

We now show how to construct a TSEM which we call a \emph{Causal Calculator} that simulates an LBA. 
%bsl: we should avoid this assumption if we can.
%We consider a special type of TSEM called a \emph{Causal Calculator} where the structural equations $\cF$ may be partially defined, i.e., some branches of the computation $\cC$ can terminate to match the behavior of a terminating computation.
%
For a given LBA $\cA$ with a tape of length $n$ we are going to build an LBA causal calculator $\cM^{\cA}$ parameterised by $n$ such that $\cA$ accepts a string iff $\cM^{\cA}$ has an accepting  computation on the corresponding input. 
To simplify the construction, we introduce a transition relation $\deltaFinal$ which is deterministic for final states of the LBA:
$$\deltaFinal(q, \gamma) = \begin{cases} 
    \{ \langle q', \gamma', d \rangle \mid \langle q, \gamma, q', \gamma', d \rangle \in \delta(q, \gamma) \} \text{ if } q \not\in F \\
    \{ \langle q, \gamma, 0 \rangle \}  \text{ if } q\in F 
    \end{cases}$$
% $$\deltaFinal(q, \gamma) = \begin{cases} 
%     \{ \langle q', \gamma', d \rangle \mid \\
%     \quad \langle q, \gamma, q', \gamma', d \rangle \in \delta(q, \gamma) \} \text{ if } q \not\in F \vee \gamma \not = \rightmarker \\
%     \{ \langle q, \gamma, 0 \rangle \}  \text{ if } q\in F \wedge \gamma = \rightmarker
%     \end{cases}$$
where $d \in \{-1, 0, 1\}$.

We first explain the intuition and then give the formal definition.
As the LBA transition relation $\delta$ is non-deterministic, we cannot update the state, symbol and the direction in which the head moves separately in the configuration of $\cM^{\cA}$. 
%The value of the variable $X_0$ is a 3-tuple which represents the result of the last transition of the LBA, i.e., a state of the LBA, the symbol written in the cell currently under the head, and the direction in which the head will move next $\{-1, 0, 1\}$.
We therefore use a special variable $X_0$ whose value is a 3-tuple where the first element is the current state of the LBA, the second is the symbol written at the last LBA transition, and the third is the direction the head moved to arrive at the current LBA configuration. In addition, to allow interventions on LBA transitions, instead of an encoding where a variable $X_i$ holds the symbol in the $i$th cell of the LBA, the index of the variable holding the symbol in $C_i$ is relative to the last position of the head. In effect, we shift the representation of the LBA tape in the causal model left or right depending on the movement of the head. This requires twice the number of variables in the causal model as there are cells in the tape of the LBA, from $X_{-(n+1)}$ to $X_{n+1}$. 
%We do this to be able to intervene on the current state of the LBA, and because this encoding can also be used for infinite tape Turing machines. However the price to pay for this ability is having twice more variables in the causal model than there are cells on the LBA tape: since we shift the tape left and right around the head, we need variables from $X_{-(n+1)}$ to $X_{n+1}$. Those $X_i$ where $i \not = 0$ hold the symbol in the LBA tape cell that corresponds to $X_i$; but because of the head movement, the cell may have an index different from $i$. The idea is that there is 
At any given time, a contiguous set of these variables correspond to $C_0,\ldots,C_{n + 1}$ but shifted by at most $n+1$ compared to the LBA cells, depending on the past movements of the head. For example, if at time $t - 1$ the head was over $C_{n+1}$, at time $t$ the second element of $X_0$ is $\rightmarker$ and $C_0$ in the LBA corresponds to the variable $X_{-(n+1)}$.

%bsl: The value of $X_0$ represents either the initial state or the result of the last transition.

\begin{definition}\label{def:CCLBA} Given an LBA $\cA = (Q, q_0, F, \Sigma, \Gamma, \delta)$ with tape of length $n$,  an \emph{LBA Causal Calculator} is a tuple $\cM^{{\cal A}} = (\cS^{\cal A}, \cF^{\cal A})$, with signature $\cS^{\cal A} = (\cV, \cR, \cD)$, where
\begin{itemize}
    \item $\cV = \{X_i\mid -(n + 1) \leq i \leq (n + 1)\}$;
    \item $\cR(X_0) = Q \times \Gamma \times \{-1, 0, 1\}$, \\ 
    $\cR(X_{i \in [-(n + 1), n + 1] \setminus \{0\}}) = \Gamma$;
    \item $\cD(X_{i \in [-n, n]})=\{X_0, X_{i-1}, X_i, X_{i+1}\}$, \\ 
    $\cD(X_{-(n + 1)}) = \{ X_0, X_{-(n + 1)}, X_{-n} \}$, \\
    $\cD(X_{n + 1}) = \{ X_0, X_{n} X_{n + 1} \}$;
\end{itemize}

\noindent
and equations $\cF^{\cA}$:
\begin{flalign*}
%&\cF_{X_0} ( X_0 = \langle q, \gamma, 0 \rangle ) = \{ \langle q', \gamma', d \rangle \mid \langle q', \gamma', d \rangle \in \deltaFinal(q, \gamma) \} &&\\
%
% &\cF_{X_0} ( X_0 = \langle q, \gamma, -1 \rangle, X_{-1} = \gamma' ) = &&\\
% &\qquad \{ \langle q', \gamma'', d \rangle \mid \langle q', \gamma'', d \rangle \in \deltaFinal(q, \gamma') \} &&\\
% %
% &\cF_{X_{i \in [-n, n + 1] \setminus \{0\}}} ( X_{0} = \langle q, \gamma, -1 \rangle, X_{i-1} =  \gamma' ) = \gamma' &&\\
% %
% &\cF_{X_0} ( X_0 = \langle q, \gamma, 1 \rangle, X_{1} = \gamma' ) = &&\\
% &\qquad \{ \langle q', \gamma'', d \rangle \mid \langle q', \gamma'', d \rangle \in \deltaFinal(q, \gamma') \} &&\\
% %
% &\cF_{X_{i \in [-(n + 1), n] \setminus \{0\}}} ( X_{0} = \langle q, \gamma, 1 \rangle, X_{i+1} = \gamma' ) = \gamma' && \\
%
&\cF_{X_0} ( X_0 = \langle q, \gamma, d \rangle, X_{d} = \gamma' ) = &&\\
&\qquad \{ \langle q', \gamma'', d' \rangle \mid \langle q', \gamma'', d' \rangle \in \deltaFinal(q, \gamma') \} &&\\
&\cF_{X_{i \in [-(n + 1), n] \setminus \{0\}}} ( X_{0} = \langle q, \gamma, d \rangle, X_{i+d} = \gamma' ) = \gamma' && 
\end{flalign*}
\end{definition}

In the initial configuration $\vec{v_0}$, $X_0 = \langle q_0, \leftmarker, 0 \rangle$, $X_{n + 1} = \rightmarker$, $X_{i \in [1, k]} = \sigma_i$, where $\sigma_1, \ldots, \sigma_k$ is the input, and $X_{i} = \#$ for $-n \leq i < 0$ and $k+1 \leq i \leq n$.

Informally, the first equation updates the value of $X_0$ and the second equation updates the value of all other variables $X_{i \in [-n, n + 1] \setminus \{0\}}$. 
%For both equations, there are three disjoint cases which depend on $d$: when the head did not move at the last transition $d = 0$, when the head moved left $d = -1$, and when the head moved right $d = 1$. 

For $X_0$, if the head did not move at $t - 1$ (this includes the initial configuration), the value of $X_0$ is updated with the LBA transition $(q', \gamma'', d') = \delta^F(q, \gamma')$ at $t$, where $\gamma' = \gamma$.
If the head moved left at the last transition, then $X_0$ is updated with $(q', \gamma'', d') = \delta^F(q, \gamma')$, where $\gamma'$ is the symbol in $X_{- 1}$ (i.e., the symbol currently under the head).
If the head moved right at the last transition, then $X_0$ is updated with $(q', \gamma'', d') = \delta^F(q, \gamma')$, where $\gamma'$ is the symbol in $X_{1}$.

For $X_{i \in [-n, n + 1] \setminus \{0\}}$, if the head did not move the value of $X_i$ is updated with $\gamma' = \gamma$, i.e., remains the same.
If the head moved left, the value of $X_i$ is updated  to be the value in $X_{i - 1}$, i.e., it shifts the representation of the LBA tape in the causal model to the right. (To simplify the presentation, we abuse notation and assume that in the case where $X_{i - 1} = X_0$, $\gamma'$ is the second element of the tuple in $X_0$.)
If the head moved right, the value of $X_i$ is updated to be the value in $X_{i + 1}$. 

We say that an LBA causal calculator \textit{accepts} a string $s$ iff the computation tree $\cC^{\cM^{\cA}}$ for the input $s$ contains a sequence of valid transitions $\vec{v_0} \twoheadrightarrow_{\cM^{\cal A}} \dots \twoheadrightarrow_{\cM^{\cal A}}\vec{v_m}$ from the root node $\vec{v_0}$ to some accepting configuration $\vec{v_m}$ such that $\vec{v_{i}} \in \ConfCM, 0 \leq i \leq m$. 
A configuration $\vec{v_m} = X_{-(n+1)}\ldots$ $X_{0} \ldots X_{n+1}$ is accepting iff $X_0 = \langle q^F \in F, \gamma, 0 \rangle$, i.e., the state is final.
%A configuration $\vec{v_m} = X_{-(n+1)} \ldots X_{0} \ldots X_{n+1}$ is accepting iff $X_0 = \langle q^F \in F, \rightmarker, 0 \rangle$, i.e., where the state is final and the head is above the right endmarker.

%We prove that for any LBA and any initial configuration of the LBA, the corresponding causal calculator matches computations on each branch step by step. By `matching at step $t$' we mean that the state of the LBA at step $t$ is the same as the state component of $X_0$ at step $t$ and for any cell $i \in \{0,\ldots,n+1\}$, the cell contains symbol $\gamma$ iff the symbol in $X_{c^t(i)}$ is $\gamma$. This will give us the following theorem:

\begin{theorem}\label{thm:LBA}
An LBA $\cal A$ accepts a string if, and only if, it is accepted by the corresponding causal calculator $\cM^{{\cal A}}$.
\end{theorem}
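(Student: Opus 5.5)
The plan is to prove the theorem through a step-by-step simulation invariant. First define a correspondence between LBA configurations and TSEM configurations. Then show that the one-step relations $\twoheadrightarrow_{\cal A}$ (extended by stuttering in final states via $\deltaFinal$) and $\twoheadrightarrow_{\cM^{\cal A}}$ match under this correspondence, in both directions.

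\textbf{The correspondence.} To a TSEM configuration $\vec v$ whose $X_0=\langle q,\gamma,d\rangle$, associate an offset $c$ with $|c|\le n+1$ and read off an LBA configuration:
\begin{itemize}
\item the state is $q$;
\item the head cell index is $h$;
\item cell $C_j$ is stored in $X_{j-h+d}$ for $j\neq h-d$;
\item the cell $C_{h-d}$, the cell just written, is stored as the second component $\gamma$ of $X_0$.
\end{itemize}
Equivalently, there is a shift map $c^t(j)=j-h_{t-1}$, where $h_{t-1}$ is the head position before the last move, so that $X_{c^t(j)}$ holds $C_j$ for $j\ne h_{t-1}$. The head is then over the cell stored in $X_d$; when $d=0$ this is $\gamma$ itself.

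Since the head position lies in $[0,n+1]$, the index $j-h_{t-1}$ ranges over $[-(n+1),n+1]$. So every cell has a variable, which is why $2n+3$ variables suffice. The initial $\vec v_0$ with $X_0=\langle q_0,\leftmarker,0\rangle$ and $h_{-1}=0$ clearly corresponds to $\lambda_s$.

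\textbf{Forward direction (LBA step gives TSEM step).} Let $\vec v$ correspond to $\lambda$, and let $\lambda\twoheadrightarrow_{\cal A}\lambda'$ via $(q',\gamma'',d')\in\delta(q,\gamma')$, where $\gamma'$ is the symbol under the head, which is exactly the value of $X_d$. Choose $X_0'=\langle q',\gamma'',d'\rangle\in\cF_{X_0}$, and for $i\ne0$ take the deterministic value $X'_i=\vec v^{|X_{i+d}}$, with the stated convention for $X_0$. Then verify that the new offset is $h_t=h_{t-1}+d$: every cell other than the one just written shifts by $d$, consistent with $c^{t+1}(j)=j-h_t$. The freshly written cell $C_{h_t}$ now lives in the second component $\gamma''$ of $X_0'$.

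The verification is a case split on $d\in\{-1,0,1\}$. The one delicate point is that the previously written symbol $\gamma$, which lived in $X_0$, moves into $X_{-d}$. This is exactly what the convention ``when $X_{i+d}=X_0$ use the second component'' provides.

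\textbf{Converse direction (TSEM step gives LBA step).} The only nondeterminism in $\cM^{\cal A}$ is in $X_0$, and it ranges over $\deltaFinal(q,\gamma')$. So every successor of $\vec v$ corresponds to an LBA successor of $\lambda$, or to $\lambda$ itself when $q\in F$, the stuttering case.

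\textbf{Concluding the theorem.} Induction on path length gives a bijection-up-to-correspondence between finite paths from $\lambda_s$ and paths from $\vec v_0$. Acceptance is preserved in both directions:
\begin{itemize}
\item an LBA path reaching $q^F\in F$ is mapped to a TSEM path reaching $X_0=\langle q^F,\cdot,\cdot\rangle$, and one more stuttering step yields direction $0$, as the acceptance condition requires;
\item conversely, any accepting TSEM path projects to an LBA path reaching a final state; the stuttering steps are taken only after a final state is reached, so they can be dropped.
\end{itemize}

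\textbf{Main obstacle.} I expect the hard part to be the bookkeeping of the shifted indices. Two checks are needed. First, the tape window never falls off the variable range; this needs $|h_{t-1}|\le n+1$ together with the endmarker constraint on $\delta$. Second, the boundary variables $X_{\pm(n+1)}$, whose equations read only one neighbour, never need to read an undefined neighbour that holds a real cell. For the second check I would show that whenever $X_{\pm(n+1)}$ would fetch from outside the range, the window is positioned so that the corresponding value is irrelevant, i.e. it does not correspond to any cell $C_j$. I would handle this with a careful case analysis at the endmarkers.
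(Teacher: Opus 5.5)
Your proposal is correct and follows essentially the same route as the paper: a step-by-step bisimulation in which cell $C_j$ is held in $X_{j-o}$, where $o$ is the head position before the last move (the paper's $o_t=\sum_i d_i - d_t$), with forth and back steps driven by the only nondeterministic equation, the one for $X_0$. Your explicit handling of the boundary variables $X_{\pm(n+1)}$, whose out-of-range reads you correctly show never correspond to a real cell (though their equations must still be completed with some fixed value so that successors exist), and of the extra stuttering step needed to reach direction $0$ in the acceptance condition supplies details that the paper's proof leaves implicit.
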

\begin{proof}
In order to prove this we show that there is a bisimulation relation between the nodes of the computation tree of $\cal A$, $\cT$, and of computation tree of $\cM^{{\cal A}}$, $\cC$, where bisimilar nodes have the same states and tape contents. If $\lambda$ and $\vec{v}$ are bisimilar, and $\lambda \twoheadrightarrow_{\cal A} \lambda'$, then there is a transition
$\vec{v} \twoheadrightarrow_{\cM^{\cal A}} \vec{v'}$ such that 
$\lambda'$ and $\vec{v'}$ are bisimilar, and vice versa. This will imply that a configuration with an accepting state is reachable in one tree if and only if it is reachable in another.

Since the variables of $\cM^{\cal A}$ correspond to different tape cells at different steps, the notion of bisimulation is non-trivial and depends on the history of the computation so far. We will say that a transition 
$\lambda \twoheadrightarrow_{\cal A} \lambda'$ is by $(q,\gamma,q', \gamma', d)$ if this was the element of $\delta^F$ used for this transition, and label this edge in $\cT$ by $d$. Similarly, we say that a transition $\vec{v} \twoheadrightarrow_{\cM^{\cal A}} \vec{v'}$ is by $(q,\gamma,q', \gamma', d)$ if the value of $X_0$ in $\vec{v}$ is $(q,\gamma,-)$ and the value of $X_0$ in $\vec{v'}$ is $(q',\gamma',d)$, and we label this edge in $\cC$ by $d$. 

Finally, we say that two nodes $\lambda$ and $\vec{v}$ at step $t$ are \emph{$d_1,\ldots, d_t$ bisimilar} if (i) the path from the root node to each of them is labelled $d_1,\ldots, d_t$, (ii) the state in $\lambda$ is the same as  the state component in the value of $X_0$ in $\vec{v}$, (iii) the symbol in $C_k$ in $\lambda$ is the same as the symbol in $X_{k-o_t}$ (the symbol in $X_0 = (-,\gamma,-)$ is $\gamma$), where $o_t = \Sigma^{i=t}_{i=0} d_i - d_t$. Observe that $j=\Sigma^{i=t}_{i=0} d_i$ is the position of the head in $\lambda$. 
(Note that the offset $o_t$ can also be calculated using the index of the variable containing $\leftmarker$, but this more general approach also works for non-deterministic TMs.)

Clearly, $\lambda_0$ and $\vec{v_0}$ are $\epsilon$-bisimilar, where $\epsilon$ is an empty path. 

For the inductive step, assume that $\lambda$ and $\vec{v}$ at step $t-1$ are $d_1,\ldots, d_{t-1}$ bisimilar. We need to show that \\
$(\cT \rightarrow \cC)$ if there is a transition from $\lambda$ to $\lambda'$ by $(q,\gamma,q', \gamma', d)$, then there is a transition from $\vec{v}$
to $\vec{v'}$ by $(q,\gamma,q', \gamma', d)$, and $\lambda'$ and $\vec{v'}$ are $d_1,\ldots, d_{t-1},d$ bisimilar;\\
$(\cC \rightarrow \cT)$  if there is a transition from $\vec{v}$ to $\vec{v'}$ by $(q,\gamma,q', \gamma', d)$, then there is a transition from $\lambda$
to $\lambda'$ by $(q,\gamma,q', \gamma', d)$, and $\lambda'$ and $\vec{v'}$ are $d_1,\ldots, d_{t-1},d$ bisimilar.

We prove 
%the back and forth conditions 
those conditions for $d=1$. The proofs for $d=-1$ and $d=0$ are similar.\\
$(\cT \rightarrow \cC)$ Let $\lambda = C_0\ldots q C_j\ldots C_{n+1}$, that is, the head is over $C_j$ and the state is $q$. By the inductive hypothesis, 
$X_0=(q,C_{j-d_{t-1}},d_{t-1})$, and the symbol that the head is reading is in $X_{j-o_{t-1}}$. Let the transition from $\lambda$ be by some $(q,\gamma,q',\gamma',1)$ where $\gamma$ is in $C_j$. The resulting $\lambda' = C_0\ldots C_j=\gamma' q'C_{j+1}\ldots C_{n+1}$.  The successor of $\vec{v}$ by $(q,\gamma,q',\gamma',1)$ is (i) reachable from the root by the same sequence $d_1,\ldots, d_{t-1},1$ as $\lambda'$,
(ii) has the same state $q'$ due to the equation for $X_0$, (iii)  the symbol that was held in $X_i$ is now held in $X_{i-1}$ because of the equation for $X_i$, which maintains the correspondence with LBA cells since the offset has increased by $d=1$.\\
$(\cC \rightarrow \cT)$ Let $X_0$ in $\vec{v}$ have value $(q,\chi,d_{t-1})$.
Assume that transition in $\cM^{\cal A}$ is to $\vec{v'}$ by $(q,\gamma,q',\gamma',1)$ where $\gamma$ is the symbol in $X_{d_{t-1}}$ (which by the inductive hypothesis is the symbol the head is reading in $\lambda$). Since $\vec{v}$ and $\lambda$ have the same state $q$ and the head position in $\lambda$ is over $\gamma$, ${\cal A}$ can make a transition to 
$\lambda' = C_0\ldots C_j=\gamma' q' C_{j+1}\ldots C_{n+1}$. The argument that $\vec{v'}$ is $d_1,\ldots, d_{t-1},1$ bisimilar to $\lambda'$ is the same as above.
\end{proof}

%\mg{$\delta \to \delta^F$?} done

%\input{X-deterministic-turing-machines.tex}
\section{Encoding Turing Machines}\label{sec:TM}

In this section we show that TSEMs with a countable set of variables $\cV$ are Turing complete. For a simpler construction, we consider the case of \emph{deterministic} Turing Machines, since they are known to be as powerful as non-deterministic ones. We begin by recalling the definition of a Turing Machine \cite{Hopcroft/Ullman:79a}.

\begin{definition}[Turing machine]\label{def:tm} A Turing Machine (TM) is a tuple $T = (Q, q_0, F, \Sigma, \Gamma, \delta)$, where
\begin{itemize}
    \item $Q$ is a finite set of states with $q_0\in Q$ the initial state;
    \item $F\subset Q$ is the set of final (or accepting) states;
    \item $\Sigma$ is a finite set of input symbols;
    \item $\Gamma = \Sigma \cup \{\# \}$ is the finite set of tape symbols, where $\# \not \in \Sigma$ is the blank symbol;
    \item $\delta: Q\setminus F\times \Gamma \longrightarrow Q\times\Gamma \times \{-1, 1\}$ is a function mapping a state $q\in (Q\setminus F)$ and a symbol $\gamma\in \Gamma$ to a tuple $\delta(q, \gamma) = (q', \alpha, d)$, where $q'$ is the next state, $\alpha\in \Gamma$ is a symbol that must be written in the cell and $d\in \{-1, 1\}$ is the direction in which the head moves ($-1$ indicates the head moves to the left). 
\end{itemize}
\end{definition}
\noindent
To simplify notation we will write $d=Dir(q, \gamma)$ (for direction), $\sigma = Sym(q, \gamma)$ (for symbol) and $q' = Out(q, \gamma)$ (for outcome).

We assume that all cells on the tape are indexed with integers relative to the head. So, the current cell always has index $0$, its right-side cells have positive indexes, and left-side cells have negative ones. We represent a \emph{configuration} of a TM as %a string 
$$\lambda = C_{-i}\dots C_{-2}C_{-1}qC_0C_1C_2\dots C_{j}$$ where $q$ is the state of the TM, the head is (always) scanning a cell indexed 0, and each $C_k$ represents a symbol written in the $k$-th cell of the tape. Note that only blank symbol $\#$ may occur on the tape on left of $-i$'s and on the right of $j$'s cells. 

\begin{definition} Let $Conf^{TM}(T)$ be the set of all possible configurations of a TM $T$. Then, a run of $T$ on input $I$ is a sequence $\lambda_0 \twoheadrightarrow_{T} \lambda_1 \twoheadrightarrow_{T}  \dots$ of elements of $Conf^{TM}(T)$, where $\lambda_0$ is the initial configuration of $T$ for input $I$, and each $\lambda_i$ yields $\lambda_{i+1}$ in one step. 
\end{definition}

A TM $T$ \emph{accepts} a string $s$ iff the run $\lambda_s \twoheadrightarrow_{T} \lambda_1 \twoheadrightarrow_{T}  \dots \twoheadrightarrow_{T} \lambda'$ of $T$ starting in the initial configuration $\lambda_s$ terminates in an accepting configuration $\lambda' =  C_{-i}\dots C_{-2}C_{-1}q^FC_0C_1C_2\dots C_{j}$ where the TM is in a final state  (i.e., $q^F \in F$).

Now we will construct a causal calculator $\cM^T$ simulating a TM $T$. Since a TM has an infinite tape, a causal calculator has to work with an input of arbitrary length. So, in contrast to Definition \ref{def:SEM}, we assume that $\cV$ contains countably many variables, however ranges and domains of all variables remain finite. The extension to infinitely many variables does not create any technical difficulties and the definitions of a computation $\cC$ (Definition \ref{def:computation}) remains unchanged. Causal models with infinitely many variables have previously been studied in \cite{Halpern2022NonRecurive}. 

\begin{definition}\label{def:canonicalCC}
    Given a TM $T = (Q, q_0, F, \Sigma, \Gamma, \delta)$, a TM causal calculator is a tuple $\cM^T = (\cS^T, \cF^T)$, with a signature $\cS^T = (\cV, \cR, \cD)$, where
\begin{itemize}
    \item $\cV = \{S\} \cup \{X_i\mid i\in\mathbb{Z}\}$;
    \item $\cR(X_i) = \Gamma$ and $\cR(S) = Q$;
    \item $\cD(S) = \{S, X_0\}$\\ 
    $\cD(X_i)=\{X_0, S, X_{i-1}, X_i, X_{i+1}\}$ for $i\in\mathbb{Z}$;
\end{itemize}

Structural equations $\cF$ are defined as follows. 

 $\cF_S(S=q, X_0=\gamma)= \begin{cases} q', \text{ if } q'\in Out(q, \gamma)\\
        q, \text{ if }  q\in F
    \end{cases}$

\noindent For $\vec{v}^{|\cD(X_{i})}=(X_0=\alpha, X_{i-1}=\beta, X_i=\gamma, X_{i+1}=\delta, S=q)$:

    $\cF_{X_i\notin \{-1, 1\}}(\vec{v}^{|\cD(X_{i})}) = \begin{cases} 
    \beta, \text{ if } -1= Dir(q, \alpha)\\
    \delta, \text{ if } 1= Dir(q, \alpha)\\
    \gamma, \text{ if } q\in F 
    \end{cases}$ 
        
   \noindent  For  $\vec{v}^{|\cD(_{X_{-1}})}=(X_0=\alpha, X_{-1}=\beta, X_{-2}=\gamma, S=q)$:
   
    $\cF_{X_{-1}}(\vec{v}^{|\cD(X_{-1})}) = \begin{cases} 
    \gamma, \text{ if } -1= Dir(q, \alpha)\\
    Sym(q, \alpha), \text{ if } 1= Dir(q, \alpha) \\
    \beta, \text{ if } q\in F 
    \end{cases}$ 

\noindent For $\vec{v}^{|\cD(X_{1})}=(X_0=\alpha, X_1=\beta, X_{2}=\gamma, S=q)$:
                                                                   
    $\cF_{X_{1}}(\vec{v}^{|\cD(X_{1})}) = \begin{cases} 
    Sym(q, \alpha), \text{ if } -1= Dir(q, \alpha)\\
    \gamma, \text{ if } 1= Dir(q, \alpha)\\
    \beta, \text{ if } q\in F
    \end{cases}$ 

\end{definition}

Variable $S$ represents the \textit{state} of the TM, and $X_i$'s represent corresponding cells on the tape. So, the range of $S$ is the set of states $Q$ of the TM, and the range of each $X_i$ is the tape alphabet $\Gamma$. Finally, domains $\cD$ demonstrate which variables are causally related. Thus, the value of the current state $S$ depends on the previous value of itself and a current symbol $X_0$. And the value of any variable $X_i$ depends on itself, $X_0, S$, and its left- and right-most cells $X_{i-1}$ and $ X_{i+1}$.  %Signature $\cS^T$ generates the dependency graph presented in Figure \ref{fig:causalTM}.

The structural equations are defined as follows. For $S$, if the previous value $q_1$ of $S$ is a final state of the TM ($q_1\in F$), then the current value of $S$ remains $q_1$. Otherwise, $S$ gets value $q_2$, which is the result of the transition  $\delta(q_1, \alpha)$ in the TM. Next, for any cell variable $X_i$, except $X_1$ and $X_{-1}$, and any possible input $\vec{v}^{|\cD(X_{i})}=(X_0=\alpha, X_{i-1}=\beta, X_i=\gamma, X_{i+1}=\delta, S=q)$ of $\cF_{X_i}$, it is defined as follows. In case the TM's head moves to the left ($-1= Dir(q, \alpha)$), $\cF_{X_i}$ gets the value of $X_{i-1}$; if it moves to the right ($1= Dir(q, \alpha)$) then $\cF_{X_i}$ gets the value of $X_{i+1}$; and if the state $q$ is final, $\cF_{X_i}$ inherits its own value from the previous step. This allows us to simulate the movements of the head in TM taking into account that the numeration of the tape cells is attached to the head. Thus, at any step of the computation $\cC$, the value of $X_i$ corresponds to the current symbol written in $i$'s cell on the tape of the TM. Finally, $X_{-1}$ and $X_1$ are two special cases, because TM rewrites the symbol in the current cell (indexed 0) with a new symbol $Sym(q, \alpha)$ on each step, which propagates to either $X_{-1}$ or $X_1$ depending on the direction of the next movement. Thus, for $X_{-1}$ and input $\vec{v}^{|\cD(X_{-1})}=(X_0=\alpha, X_{-1}=\beta, X_{-2}=\gamma, S=q)$, if the head moves left in $\delta(q, \alpha)$, then $\cF_{X_{-1}}$ gets the value of $X_{-2}$; if the head moves right in $\delta(q, \alpha)$, then $\cF_{X_{-1}}$ gets a symbol $Sym(q, \alpha)$ that has been written on the tape before the movement; and if $q$ is the final state, the value of $X_{-1}$ remains the same. The case for $X_1$ is symmetric.  

Now it is straightforward to define a translation between the configurations of $T$ and $\cM^T$. Given a configuration $\lambda = C_{-i}\dots C_{-2}C_{-1}qC_0C_1C_2\dots C_{j}$ of $T$, the corresponding configuration of $\cM$ is  $\vec{v}_{\lambda} = (S=q) \cup \{(X_k=C_k)\mid -i \leq k \leq j\} \cup \{(X_n=\#)\mid n\in \mathbb{Z}\setminus [-i, j] \}$. 

We say that a TM causal calculator \textit{accepts} a string $s$ iff the computation $\cC^{\cM^{T}}$ for the input $s$ contains a sequence of valid transitions $\vec{v_0} \twoheadrightarrow_{\cM^{T}} \dots \twoheadrightarrow_{\cM^{T}}\vec{v_m}$ from the initial configuration $\vec{v_0}$ to some accepting configuration $\vec{v_m}$ such that $\vec{v_{i}} \in \ConfCM, 0 \leq i \leq m$. 
A configuration $\vec{v_m}$ is accepting iff $(S = q^F)\in \vec{v_m}$, i.e., it contains the final state.

We want to show that for any initial configuration $\lambda$ of $T$ we can take a corresponding initial configuration $\vec{v} = \tau(\lambda)$ of $\cM^T$ and generate a computation $\cC$ which will simulate a computation of $T$ in a step-wise manner. 

\begin{lemma} For any $\lambda_1, \lambda_2\in Conf^{TM}$, it holds that $$\lambda_1\twoheadrightarrow_{T}\lambda_2,  \text{ iff } \vec{v}_{\lambda_1}\twoheadrightarrow_{\cM_T}\vec{v}_{\lambda_2}$$
\end{lemma}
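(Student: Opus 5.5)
The plan is to exploit determinism on both sides. For a non-final state $q$, the transition function $\delta$ of $T$ yields a unique successor of each configuration. Likewise, every structural equation of $\cM^T$ in Definition~\ref{def:canonicalCC} returns a single value. So it suffices to prove the forward direction in the strong form: if $\lambda_1\twoheadrightarrow_T\lambda_2$, then the unique $\cM^T$-successor of $\vec{v}_{\lambda_1}$ is exactly $\vec{v}_{\lambda_2}$. The converse then follows from two facts. First, $\vec{v}_{\lambda_1}$ has only one successor. Second, the translation $\lambda\mapsto\vec{v}_\lambda$ is injective, once configurations of $T$ that differ only by padding with blanks are identified, which the convention on $C_{-i},\dots,C_j$ already does. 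Final states need separate treatment. For $q\in F$, $T$ has no transition, while $\cM^T$ loops: each equation returns the variable's own previous value. I would therefore state the lemma for $\lambda_1$ in a non-final state, noting that the only extra $\cM^T$-transition is the stuttering self-loop at accepting configurations. This loop is harmless for acceptance.

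For the main step, let $\lambda_1 = \dots C_{-1}\,q\,C_0 C_1\dots$ with $q\notin F$, and let $\delta(q,C_0)=(q',a,d)$. I would split into the cases $d=1$ and $d=-1$; the latter is symmetric. For $d=1$, the head moves right after writing $a$ into cell $0$. Re-indexing relative to the new head gives $\lambda_2$ with state $q'$, $C'_{-1}=a$, and $C'_k=C_{k+1}$ for every $k\neq -1$. I then check each variable of $\cM^T$ on input $\vec{v}_{\lambda_1}$, using $X_0=C_0$ and $S=q$.
\begin{itemize}
\item $\cF_S$ returns $Out(q,C_0)=q'$.
\item For $i\notin\{-1,1\}$, and in particular for $i=0$, $\cF_{X_i}$ returns the value of $X_{i+1}$, namely $C_{i+1}$. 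This matches $C'_i$.
\item $\cF_{X_{-1}}$ returns $Sym(q,C_0)=a=C'_{-1}$.
\item $\cF_{X_1}$ returns $X_2=C_2=C'_1$.
\end{itemize}
Cells outside the nonblank window hold $\#$ and shift to positions that also hold $\#$. Hence the successor is exactly $\vec{v}_{\lambda_2}$.

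The main obstacle is bookkeeping rather than any conceptual difficulty. Three points need care. The first is that $X_0$ falls under the generic case $i\notin\{-1,1\}$, and that this correctly moves the newly scanned symbol under the head, while the written symbol $a$ is carried only via $X_{\mp1}$. The second is the slightly irregular domains and argument names for $X_{\pm1}$: the equation for $X_{-1}$ reads $X_{-2}$, so I would treat $\cD(X_{-1})$ as containing $X_{-2}$. The third is the convention identifying configurations up to blank padding, which is needed so that $\vec{v}_\lambda$ is well defined and injective. Once these are fixed, the $d=-1$ case is the mirror image, and the lemma follows.
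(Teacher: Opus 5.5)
Your proposal is correct and is essentially the paper's proof. Both split into cases ($q\in F$, and $Dir(q,C_0)=\pm 1$) and then check variable by variable that $\cF_S$, the generic $\cF_{X_i}$ (including $i=0$), and the special equations for $X_{\pm 1}$ produce exactly $\vec{v}_{\lambda_2}$.

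The differences are only in packaging:
\begin{itemize}
\item The paper does not restrict to non-final $\lambda_1$. Instead it stipulates in a footnote that $\lambda\twoheadrightarrow_T\lambda$ for final $\lambda$. Your observation that every equation returns its own previous value when $S\in F$ then settles that case, and the lemma holds as stated.
\item The paper reads the converse off per-variable ``iff'' claims. You make it explicit via determinism and injectivity of $\lambda\mapsto\vec{v}_\lambda$ up to blank padding.
\end{itemize}
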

\begin{proof}
    Given a configuration $$\lambda_1 = C_{-i}\dots C_{-2}C_{-1}qC_0C_1C_2\dots C_{j}$$ of the TM, there are three possible cases for the $$\lambda_2 = C_{-i}'\dots C_{-2}'C_{-1}'q'C_0'C_1'C_2'\dots C_{j}'$$
    
\textbf{Case} $q\in F$\footnote{Here for simplicity of the proof we assume that for any final configuration $\lambda$, $\lambda\twoheadrightarrow_{T}\lambda$.}. If the current state in $\lambda_1$ is final, then $\lambda_2=\lambda_1$. The construction of $\cM^T$ also guarantees that for any $Y\in \cV$ and any $\vec{v}$, s.t. $(S=q^F)\in \vec{v}$ it holds that $\cF_Y(\vec{v}^{|\cD(Y)})=\vec{v}^{|Y}$. So, $\vec{v}_{\lambda_1}\twoheadrightarrow_{\cM_T}\vec{v}_{\lambda_1}$, and thus $\vec{v}_{\lambda_1}\twoheadrightarrow_{\cM_T}\vec{v}_{\lambda_2}$. 

\textbf{Case} $-1= Dir(q, C_0)$. In this case the head moves to the left after writing symbol $\zeta$: $\delta(q, C_0)=(q', \zeta, -1)$. The construction of $\cM^T$ guarantees that (a) $\cF_S(\vec{v}_{\lambda_1}^{|\cD(S)}) = q'$ iff $q'=Out(q, C_0)$, (b) $\cF_{X_1}(\vec{v}_{\lambda_1}^{|\cD(X_1)}) = \zeta$ iff $\zeta = Sym(q, C_0)$ (i.e. iff $C_1'=\zeta$), and (c) for $i\neq 1$, $\cF_{X_i}(\vec{v}_{\lambda_1}^{|\cD(X_i)})=\beta$ iff $\vec{v}_{\lambda_1}^{|X_{i-1}}=\beta$ (i.e. iff $C_{i-1}=\beta$). Thus, $\lambda_1\twoheadrightarrow_{T}\lambda_2$ iff $\vec{v}_{\lambda_1}\twoheadrightarrow_{\cM_T}\vec{v}_{\lambda_2}$. 

\textbf{Case} $1= Dir(q, X_0)$. The argument is symmetric to the previous case.
\end{proof}

The next result follows immediately. 

\begin{theorem}\label{thm:turing_completeness}
    A TM accepts a string if, and only if, it is accepted by the corresponding causal calculator. 
\end{theorem}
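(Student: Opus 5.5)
The plan is to derive Theorem~\ref{thm:turing_completeness} from the preceding Lemma by induction on the length of runs. Because $T$ is deterministic and every structural equation of $\cM^T$ returns a single value on the configurations we care about, the computation $\cC$ of $\cM^T$ from $\vec{v}_{\lambda_s}$ is a single branch. The translation $\lambda \mapsto \vec{v}_\lambda$ is injective and preserves the state: $S=q$ in $\vec{v}_\lambda$ exactly when $q$ is the state of $\lambda$. We take the initial configuration of $\cM^T$ on input $s$ to be $\vec{v}_{\lambda_s}$.

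\textbf{Step 1 (forward direction).} Suppose $T$ accepts $s$ via a run $\lambda_s=\lambda_0\twoheadrightarrow_T\lambda_1\twoheadrightarrow_T\dots\twoheadrightarrow_T\lambda_m$ with the state of $\lambda_m$ in $F$. Applying the Lemma to each step gives $\vec{v}_{\lambda_k}\twoheadrightarrow_{\cM_T}\vec{v}_{\lambda_{k+1}}$ for every $k<m$. This is a sequence of valid transitions from the root to $\vec{v}_{\lambda_m}$. Since $\vec{v}_{\lambda_m}$ contains $S=q^F$, it is accepting, so $\cM^T$ accepts $s$.

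\textbf{Step 2 (backward direction).} Suppose $\vec{v_0}\twoheadrightarrow_{\cM_T}\dots\twoheadrightarrow_{\cM_T}\vec{v_m}$ with $\vec{v_0}=\vec{v}_{\lambda_s}$ and $S=q^F$ in $\vec{v_m}$. We show by induction on $k$ that $\vec{v_k}=\vec{v}_{\lambda_k}$ for some configuration $\lambda_k$ with $\lambda_s\twoheadrightarrow_T^{*}\lambda_k$.

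The base case holds by construction. For the inductive step, assume $\vec{v_k}=\vec{v}_{\lambda_k}$. If the state of $\lambda_k$ is final, then $T$ has already accepted, and the $\cM^T$ configuration stays fixed by the final-state case of the Lemma. Otherwise $\delta$ determines a unique successor $\lambda_{k+1}$ of $\lambda_k$, and by the Lemma $\vec{v}_{\lambda_k}\twoheadrightarrow_{\cM_T}\vec{v}_{\lambda_{k+1}}$. Because each $\cF_Y$ is single-valued, $\vec{v}_{\lambda_k}$ has only one successor in $\cC$. Hence $\vec{v_{k+1}}=\vec{v}_{\lambda_{k+1}}$.

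In particular $\vec{v_m}=\vec{v}_{\lambda_m}$. Since the translation preserves the state, $\lambda_m$ has a final state, so $T$ accepts $s$.

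\textbf{Main obstacle.} The delicate point is the backward direction's need for every successor of $\vec{v}_{\lambda}$ in $\cC$ to be itself the image of a TM configuration. The Lemma as stated only relates pairs that are already of the form $\vec{v}_{\lambda_1},\vec{v}_{\lambda_2}$.

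I would close this gap by checking two things directly from Definition~\ref{def:canonicalCC}. First, for a non-final state $q$, each equation has exactly one applicable case, since $Dir(q,\alpha)$ is a single value. Second, the image of a configuration with only finitely many non-blank cells again has only finitely many non-blank cells, so it lies in the range of the translation.

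One must also note that for $q\notin F$ the equation $\cF_S$ returns exactly $Out(q,\gamma)$ and never $q$ itself. This gives the uniqueness of successors needed above and completes the argument.
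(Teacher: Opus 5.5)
Your proposal is correct and takes essentially the paper's approach. The paper only says the theorem follows immediately from the Lemma, and your argument is that implicit reasoning made explicit: you chain the Lemma along the run for the forward direction, and for the backward direction you use single-valuedness of the equations to force the unique successor of $\vec{v}_{\lambda_k}$ to be $\vec{v}_{\lambda_{k+1}}$. Your separate check that successors have finitely many non-blank cells is redundant, since uniqueness together with the Lemma's forward direction already shows each successor lies in the image of the translation.
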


As we mentioned before, it is a well-known fact that any non-deterministic TM $T^*$ can be simulated by a deterministic TM $T$ \footnote{Recall that in a non-deterministic TM $T^*$  the transition function $\delta$ is replaced with a relation 
$\delta^*\subseteq \bigl(Q\setminus F\times \Gamma\bigr) \times \bigl(Q\times\Gamma \times \{-1, 1\}\bigr)$}. So, Theorem \ref{thm:turing_completeness} implies that TSEMs can also simulate $T^*$ by simulating the corresponding $T$. However, such a procedure exponentially increases the computation time of $T$ (and thus of $\cM^T$) with respect to $T^*$. But since in general case TSEMs admit non-deterministic structural equations, they are capable of simulating an NTM in a stepwise manner, with no additional computing time. This can be demonstrated using a slightly modified construction of the LBA causal calculator from Definition \ref{def:CCLBA} that does not contain tape bounds. 

\begin{theorem}\label{thm:NTMCC}
An NTM accepts a string if, and only if, it is accepted by the corresponding causal calculator.
\end{theorem}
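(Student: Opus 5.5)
We will prove Theorem~\ref{thm:NTMCC} by adapting the LBA causal calculator of Definition~\ref{def:CCLBA} to an unbounded tape and then reusing the bisimulation argument from the proof of Theorem~\ref{thm:LBA}.

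\emph{Construction.} Given an NTM $T^* = (Q, q_0, F, \Sigma, \Gamma, \delta^*)$, we define the NTM causal calculator $\cM^{T^*}$ as follows.
The variables are $\cV = \{X_i \mid i \in \mathbb{Z}\}$.
The ranges are $\cR(X_0) = Q \times \Gamma \times \{-1,0,1\}$ and $\cR(X_i) = \Gamma$ for $i \neq 0$.
The domains are $\cD(X_i) = \{X_0, X_{i-1}, X_i, X_{i+1}\}$ for every $i$, so there are no boundary cases.
For the equations, we first extend $\delta^*$ to $\deltaFinal$ exactly as in Section~\ref{sec:LBA}: non-final states keep their $\delta^*$ successors, and final states loop via $\langle q,\gamma,0\rangle$.
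We then take the same two equations as in Definition~\ref{def:CCLBA}, now for all $i \in \mathbb{Z}\setminus\{0\}$.
The nondeterministic equation for $X_0$ is $\cF_{X_0}(X_0=\langle q,\gamma,d\rangle, X_d=\gamma') = \deltaFinal(q,\gamma')$.
The deterministic shift equation is $\cF_{X_i}(X_0=\langle q,\gamma,d\rangle, X_{i+d}=\gamma')=\gamma'$, with the same convention that when $i+d=0$ the value is the second component of $X_0$.
All ranges and domains stay finite, as the countable setting of Section~\ref{sec:TM} requires.
The initial configuration has $X_0=\langle q_0, \#, 0\rangle$, where the second component is a dummy that is never read because $d=0$ makes $X_0$ read its own symbol. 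We take care that this equals the symbol under the head, i.e.\ $X_0=\langle q_0,\sigma_1,0\rangle$ when the head starts on the first input symbol. We set $X_i=\sigma_{i+1}$ for the input positions and $\#$ everywhere else.
The only sanity check is that each $\cF_{X_i}$ has a finite domain, which holds by construction.

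\emph{Bisimulation.} Next we define $d_1,\ldots,d_t$-bisimilarity between nodes of the NTM run tree and nodes of $\cC$, as in the proof of Theorem~\ref{thm:LBA}.
The paths from the root carry the same direction labels, and the state of the NTM equals the first component of $X_0$.
For the tape, the cell at absolute position $k$ holds the same symbol as $X_{k-o_t}$, with $o_t=\sum_{i\le t} d_i - d_t$. The symbol of $X_0$ is read as its second component.
The root nodes are $\epsilon$-bisimilar. For the inductive step we check both directions for $d=1$, and the cases $d=-1$ and $d=0$ are symmetric.

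In the forth direction, an NTM transition by $(q,\gamma,q',\gamma',1)$ has its $\gamma$ stored in $X_{d_{t-1}}$ by the inductive hypothesis. Hence $\langle q',\gamma',1\rangle\in\cF_{X_0}$, and the deterministic shifts realize the offset change.
In the back direction, any choice from $\cF_{X_0}$ is an element of $\deltaFinal(q,\gamma)$ for the symbol $\gamma$ actually under the NTM head, so the NTM can take the same transition.
Because every $X_i$ with $i\neq0$ is deterministic given $X_0$, each successor of $\vec v$ is determined by its $X_0$ value. This gives a one-to-one matching of children, and the tape correspondence follows exactly as before.

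\emph{Acceptance and main obstacle.} Finally, bisimilar nodes agree on whether the state is final. Hence a final configuration is reachable in the NTM tree iff an accepting configuration, i.e.\ one with $X_0=\langle q^F,\gamma,0\rangle$ or more generally with a final state component, is reachable in $\cC$.
The main obstacle is the infinite tape. The LBA proof had a fixed window $X_{-(n+1)},\dots,X_{n+1}$ that could hold the shifted tape, whereas here the offset $o_t$ is unbounded.
We handle this by indexing over all of $\mathbb{Z}$ and noting that at each finite step only finitely many cells are non-blank. The correspondence $k\mapsto k-o_t$ is a bijection of $\mathbb{Z}$, so no information is lost at a boundary.
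A secondary subtlety is that the NTM definition allows only $d\in\{-1,1\}$, so the value $d=0$ occurs only at the initial step and in final-state loops. We will check that these cases are covered by the $d=0$ branch of the equations.
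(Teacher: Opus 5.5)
Your proposal is correct and follows essentially the same route as the paper's proof. Both adapt the LBA calculator to $\mathbb{Z}$-indexed variables and run the same direction-labelled back-and-forth bisimulation; your absolute-position correspondence $X_{k-o_t}$ is equivalent to the paper's head-relative $C_k \leftrightarrow X_{k+d_t}$, and putting $X_i$ in $\cD(X_i)$ is required for the $d=0$ branch, which the paper's domain omits. The one thing to fix is the claim that the second component of the initial $X_0$ is a never-read dummy: with $d=0$ it is exactly the symbol $\cF_{X_0}$ reads, as your own next sentence acknowledges.
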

\begin{proof}
    The detailed proof is presented in Technical Appendix.
\end{proof}

\section{Potential Applications}\label{sec:interventions}

The key feature of causal models is that they are specifically designed for counterfactual reasoning. Causal models contain enough information to represent not only the actual sequence of events that occurred, but also hypothetical counterfactual alternatives. This capability aligns closely with theoretical concerns in computability theory, when we are interested in reasoning not only about what happened in the actual computation, but also what would have happened if a %particular transition had not occurred or if a 
different computational path had been taken. TSEMs provide sufficient expressive power to analyse both the actual computation and its counterfactual variations enabling new forms of reasoning about temporally extended processes by means of the 
%two types of 
interventions introduced in Section \ref{sec:temporalmodels}. 

Interventions allow us to alter value of some variable(s) 
%or change a specific structural equation(s) 
at a given time step and observe how this might change the overall outcome of a computation. 
For example, interventions allow us to ask questions such as ``Which part of the input had causal influence on the output?" or ``Had the random bit at step $t$ been flipped, would the output remain unchanged?" to analyse the internal causal structure that governs a computing device to arrive at a given result.

One possible application is the analysis of the causal influence of alternating bits of input, because 
%value 
interventions provide a straightforward way to explore how local changes in the input propagate through the computation. As a simple example, consider a Turing machine that accepts binary strings if the input contains an alternating pattern of 0s and 1s, e.g.``$0101\dots$" or ``$1010\dots$". A causal analysis of such a computation may reveal not only when the machine accepts or rejects the input, but also how individual bits contribute to the machine's decision. Suppose the machine halts and accepts the input ``0101'', which may be represented as the initial assignment of a canonical causal calculator (Definition \ref{def:canonicalCC}) $\vec{v_0}=(X_0=0,X_1=1,X_3=0,X_5=1$). Then, by systematically intervening on each bit, changing 0 to 1 or vice versa (in our notation $do(X_i^0\gets 1)$), we can identify which bits are causally necessary for acceptance. For example, the intervention $do(X_1^0\gets 0)$ disrupts the alternation pattern and causes the machine to reject, indicating that this bit $X_1$ has a critical causal link to the outcome of the computation. Conversely, if altering a bit leaves the outcome unchanged, we can classify it as causally inert or redundant. 

At the same time, we may be interested in intervening not only on the input bits (intervening at time step $t=0$), but also in analysing the role of some random bit alternation at an arbitrary step $t'$. This allows reasoning about computations with corrupted or faulty memory, for example. One potential application of such counterfactual reasoning is  the analysis of so-called soft errors \cite{soft_error}. A soft error is a type of a computer system fault that alters data or logic states without causing permanent damage to the hardware.  It is typically caused by external factors like cosmic rays or radioactive decay, which can flip a bit in memory or logic circuits, leading to incorrect computational behaviour. A recent example of this kind of error is the Airbus  incident, where intense solar radiation was claimed to corrupt data critical to the functioning of flight controls, which resulted in a recall of 6000 aircraft \cite{reuters}.  
By systematically intervening on different values 
%or structural equations 
in arbitrary steps of the computation and exploring the causal consequences of these changes, we can study the fault tolerance of a system and develop methods to mitigate the effects of soft errors in critical computational applications.

We conclude this section by commenting on the decision problems involved, namely establishing causality.
In Definition \ref{def:intervention}, as time indexes are fixed for all variables, the decision problem of whether some variable values are the cause of other variable values \emph{at a fixed step} is decidable for both LBA and TM causal calculators. However, if we would like to represent the outcome in a richer language, for example using an LTL temporal operator eventually $\mathsf{F}$, e.g. $\mathsf{F}(\vec{Y}=\vec{y})$ instead of $\vec{Y}^{\vec{t}}=\vec{y}$, then 
%both problems 
the problem becomes undecidable for the case of TM causal calculator, because the formula $\mathsf{F} (S=q^F)$ basically encodes the halting problem for Definition \ref{def:canonicalCC}.
\section{Discussion and Future Work} \label{sec:conclusion}

In 
%classical computability theory 
reasoning about computation, we are usually interested in answering questions like whether a given function is computable, how long it takes to compute, or how much resources it requires. However, we typically lack a formal framework for asking why a computation led to a specific outcome, or what would have happened if a particular input or step had been different. 

In this paper, we used temporal causal models to address this gap and to analyse the internal dynamics of computations in a causally meaningful way. We  
introduced a temporal causal framework with non-deterministic structural equations and defined interventions for this setting. We then proved that TSEMs can simulate Linear Bounded Automata and TSEMs with countably many variables are Turing complete. Finally, we discussed potential applications of interventions for counterfactual reasoning about computations. We believe our results can enrich the field of formal verification with TSEM-based techniques for reasoning about safety, fault tolerance and explainability of computational systems. 

In future work we would like to explore other notions of interventions that make sense for computations. For example, we could consider 
atomic \emph{structure} interventions, $do(Y^n(\vec{X}=\vec{x})\gets y$), where $\vec{X}=\cD(Y), \vec{x}\in\cR(\vec{X})$, says that at time $n$ the structural equation $\cF_Y$ for $Y$ must be rewritten so that $\cF_X(\vec{X}=\vec{x}) = y$. As before, a structure intervention $do(\vec{Y}^{\vec{n}}(\vec{X}=\vec{x})\gets \vec{y})$ is a vector of atomic interventions. 
%
% This definition of interventions is similar to the one used in \cite{Gladyshev_Alechina_Dastani_Doder_Logan_2025}. 

% \mg{we did not have structure intervention in [Gladyshev et al., 2025a]}

\begin{definition}[Structure Intervention] Given a TSEM $\cM$ and a structure intervention $I=do(\vec{Y}^{\vec{n}}(\vec{X}=\vec{x})\gets \vec{y})$, let dynamic structural equations $\cF^i$ be defined as follows. For any $Y\in \cV, \vec{X}=\cD(Y)$ and $\vec{x}=\cR(\vec{X})$, $\cF^0_Y(\vec{X}=\vec{x})= y'$ if $Y^0(\vec{X}=\vec{x})\gets y' \in I$ and $\cF^0_Y(\vec{X}=\vec{x})= \cF(\vec{X}=\vec{x})$ otherwise. For any $i>0$,
$$\cF^i_Y(\vec{X}=\vec{x})= \begin{cases}
    y', \text{ if } Y^i(\vec{X}=\vec{x})\gets y' \in I\\
   \cF_Y^{i-1}(\vec{X}=\vec{x}), \text{ otherwise} 
\end{cases}$$
Finally, the computation is defined as 
$$\cC^{I}(i) = \prod\limits_{X\in \cV} \cF_X^{i-1}\bigl(\cC^{I}(i-1)^{|\cD(X)}\bigr) \text{ for $i>0$}$$ 
\end{definition}

Note that structure interventions are not expressible in terms of standard interventions. Consider the model $\cV=\{X\}$, $\cR(X)=\{0,1\}$ and $\cF_X (X=x)=1$ for any $x$. So, whatever the initial assignment of $X$ is, starting from step $i>0, (X=1)\in C(i)$. With a structure intervention, we can turn the equation into a constant function returning 0 instead of 1: $\cF_X (X=x)=0$ for any $x$. So, we can enforce that for $i>0, (X=0)\in C(i)$. Standard interventions allow changing the value of $X$ only on the steps specified in the intervention. So, we can enforce $X=0$ on an arbitrary long but \emph{finite} interval of the computation. Thus, an infinite sequence of $X=0$ cannot be achieved by changing the value of $X$ at a finite sequence of steps.
Structure interventions would enable us to talk about changing transition rules in a computing device rather than a particular value, enabling us to model e.g. software patches.

% For future work, it is interesting to consider causal representations of various restricted versions of Turing Machine. In particular, we find linear bounded automata (which are essentially Turing machines restricted to a finite portion of the tape) especially interesting objects to study, because they can be represented with a finite temporal causal model, whose model-checking problem is was studied in \cite{Gladyshev_Alechina_Dastani_Doder_Logan_2025}.

%Another direction for future work is extending the setting to allow richer temporal languages. In Definition \ref{def:intervention}, as time indexes are fixed for all variables the decision problem of whether some variable values are the cause of other variable values \emph{at a fixed step} is decidable for both LBA and TM causal calculators. However, if we would like to represent the outcome in a richer language, for example using an LTL temporal operator eventually $\mathsf{F}$, e.g. $\mathsf{F}(\vec{Y}=\vec{y})$ instead of $\vec{Y}^{\vec{t}}=\vec{y}$, then 
%both problems 
%the problem becomes undecidable for the case of TM causal calculator, because the formula $\mathsf{F} (S=q^F)$ basically encodes the halting problem for Definition \ref{def:canonicalCC}.
Formal reasoning about structure interventions will require a richer syntax (being able to say that a particular equation is the cause of a state of affairs). We can also enrich the syntax with temporal operators, to express that some value is reachable in principle.

Further variations of causality problems for TSEMs may also be developed besides the extension with a richer syntax. Firstly, for non-deterministic computations it may be important to check if some $C$ is the cause of some outcome $O$ on all or only some branches of the computation tree. Secondly, we presented the simplest form of the definition of a cause, called but-for cause, however several alternative approaches to actual causality have been developed, e.g., \cite{HalpernBook,Beckers2021,beckers2025actual}. The adaptation of these approaches to temporal settings and validating whether arguments against but-for definition hold for TSEM framework as well as extending existing formal verification techniques, e.g., \cite{Lorini2024,KR2025-59} to temporal setting are important directions for future research. 

Finally, the canonical causal calculators constructed in Sections \ref{sec:LBA}--\ref{sec:TM} have a specific structure since they are specifically designed to simulate LBA and TMs. However, other architectures of causal models (e.g. a more distributed structures with multiple `inner' variables) may be more suitable in some applications.  In this context, an important question to ask is whether two structurally different causal models execute the same computational behaviour under various interventions. We believe that recent work on the notions of causal model equivalence  \cite{Beckers2021eq,Gladyshev_Alechina_Dastani_Doder_Logan_2025} as well as on the notions of an \textit{abstraction} or a \textit{transformation} of one causal model into another \cite{rubenstein2017causal,Beckers_abstractions,willig2023do} can bring new formal techniques initially developed for causal reasoning to analysis of computation and enrich the field.

%% The file named.bst is a bibliography style file for BibTeX 0.99c
\bibliographystyle{named}
\bibliography{ijcai26}

%\clearpage
%% Appendix:
\clearpage
\section*{Technical Appendix}

\paragraph{Stepwise simulation of an NTM}
Here we demonstrate how to extend the construction of the LBA causal calculator from Section 3 to simulate a non-deterministic TM (NTM) is stepwise manner and prove \textbf{Theorem 3}. The overall argument is very similar to Section 3, except the numeration of the tape cells.  We start with a definition of NTM \cite{Hopcroft/Ullman:79a}.

\begin{definition} A non-deterministic Turing Machine  (NTM) is a tuple $T^* = (Q, q_0, F, \Sigma, \Gamma,\delta^*)$, in which all elements except $\delta^*$ are identical to TM (Definition 8), and the transition function is replaced with a relation 
$$\delta^*\subseteq \bigl(Q\setminus F\times \Gamma\bigr) \times \bigl(Q\times\Gamma \times \{-1, 0, 1\}\bigr)\footnote{Here for technical reasons we assume that $d=\{-1, 0, 1\}$ rather than $\{-1, 1\}$.  }$$
\end{definition}

%\mg{Does footnote 4 requires additional explanation? }
%\natasha{or why don't we just remove 0 from NTM? Is it because you don't want it to stop?}
%\mg{Mainly because in the initial step we will need to assign $X_0$ with some direction and each $C_k$ corresponds to $X_{k+d_t}$. In this case the direction of the previous movement would be non 0 and would require a special explanation why the correspondence between cells is different at step 0}

Following Section 4, we represent a \emph{configuration} of an NTM as a string 
$$\lambda = C_{-i}\dots C_{-2}C_{-1}qC_0C_{1}C_{2}\dots C_{j}$$ 
where $q$ is the state of the TM, the tape head is scanning a symbol indexed $i$, and each $C_l$, $-i \leq l \leq j$ represents a symbol written in the $l$-th cell of the tape. Note that only the blank symbol $\#$ may occur in cells to the left of $-i$ and to the right of $j$. We use $\ConfTM(T^*)$ to denote the set of all configurations of $T^*$.
%\mg{I changed the definition of a configuration to match Section 4}

\begin{definition} A run of an NTM $T^*$ on input $I$ is a tree $\cT$. The nodes of $\cT$ are elements of $\ConfTM(T^*)$ with a root node $\lambda_0$ representing the initial configuration of
$T^*$ for input $I$, and $\lambda \twoheadrightarrow_{T^*}\lambda' \in \cT$ iff $\lambda'$ is a possible result of transition from $\lambda$ according to $\delta^*$.
\end{definition}

Similarly to Section 3, to simplify the construction, we introduce a transition relation $\deltaFinal$ which is deterministic for final states of the NTM:
$$\deltaFinal(q, \gamma) = \begin{cases} 
    \{ \langle q', \gamma', d \rangle \mid \langle q, \gamma, q', \gamma', d \rangle \in \delta^*(q, \gamma) \} \text{ if } q \not\in F \\
    \{ \langle q, \gamma, 0 \rangle \}  \text{ if } q\in F 
    \end{cases}$$
% $$\deltaFinal(q, \gamma) = \begin{cases} 
%     \{ \langle q', \gamma', d \rangle \mid \\
%     \quad \langle q, \gamma, q', \gamma', d \rangle \in \delta(q, \gamma) \} \text{ if } q \not\in F \vee \gamma \not = \rightmarker \\
%     \{ \langle q, \gamma, 0 \rangle \}  \text{ if } q\in F \wedge \gamma = \rightmarker
%     \end{cases}$$
where $d \in \{-1, 0, 1\}$.

%Now we want to extend the result of Theorems \ref{thm:turing_completeness}--\ref{thm:DTIME} to the case of non-deterministic computations. For this purpose we extend TSEMs framework with non-deterministic structural equations discussed in \cite{beckers2025nondeterministic}. Then, we demonstrate that non-deterministic temporal causal models can simulate non-deterministic Turing Machines.   

% Of course, one might argue that given an NTM $T_1^*$ we may construct a DTM $T_2$ simulating $T_1^*$ and then simulate $T_2$ with a causal calculator $\cM^{T_2}$ constructed above. 

% \begin{definition} A \emph{non-deterministic} causal model $\cM^* = (\cS, \cF^*)$ is a model with non-deterministic structural equations defined as
% $$\mathcal{F^*}_Y:\prod\limits_{X \in \cD(Y)}\cR(X)\longrightarrow 2^{\cR(Y)}$$    
% \end{definition}

% \begin{definition} 
% A computation $\cC^*$ for $\cM^*$ with initial assignment $\vec{v_0}$ is a tree $\cT^{\cC}$, whose nodes are elements of $Conf^{CM}(\cM^*)$ with a root
% node $\vec{v_0}$, and $\vec{v_1} \twoheadrightarrow_{\cM^*} \vec{v_2}$ iff $\forall X\in\cV, \vec{v_2}^{|X}\in \cF_X(\vec{v_1})$.
% \end{definition}

An NTM \emph{accepts} a string $s$ iff the computation tree $\cT$ for the input $s$ contains a sequence of valid transitions $\lambda_s \twoheadrightarrow_{T^*} \dots \twoheadrightarrow_{T^*}\lambda'$ from the root node $\lambda_s$ to some accepting configuration $\lambda'$, such that $q^F\in \lambda'$. 

\begin{definition}\label{def:CC_NTM} Given a non-deterministic TM $T^* = (Q, q_0, F, \Sigma, \Gamma, \delta^*)$,  a non-deterministic canonical causal calculator is a tuple $\cM^{T^*} = (\cS^{T^*}, \cF^{T^*})$ with signature $\cS^{T^*} = (\cV, \cR, \cD)$ where
\begin{itemize}
    \item $\cV = \{X_i\mid i\in \mathbb{Z}\}$;
    \item $\cR(X_0) = Q \times \Gamma \times \{-1, 0, 1\}$, \\
    $\cR(X_{i \neq 0}) = \Gamma$;
    \item $\cD(X_{i})=\{X_0, X_{i-1}, X_{i+1}\}$;
\end{itemize}

\noindent
and equations $\cF^{T^*}$:
%\natasha{Brian is rewriting the equations for LBA to add the case when the head does not move, and the explanations, we can paste them here afterwards or just leave it as it is, unlikely people will read very carefully.}
\begin{flalign*}
% &\cF_{X_0} ( X_0 = \langle q, \gamma, 0 \rangle ) = \{ \langle q', \gamma', d \rangle \mid \langle q', \gamma', d \rangle \in \deltaFinal(q, \gamma) \} &&\\
% %
% &\cF_{X_0} ( X_0 = \langle q, \gamma, -1 \rangle, X_{-1} = \gamma' ) = &&\\
% &\qquad \{ \langle q', \gamma'', d \rangle \mid \langle q', \gamma'', d \rangle \in \deltaFinal(q, \gamma') \} &&\\
% %
% &\cF_{X_{i\neq 0}} ( X_{0} = \langle q, \gamma, -1 \rangle, X_{i-1} =  \gamma' ) = \gamma' &&\\
% %
% &\cF_{X_0} ( X_0 = \langle q, \gamma, 1 \rangle, X_{1} = \gamma' ) = &&\\
% &\qquad \{ \langle q', \gamma'', d \rangle \mid \langle q', \gamma'', d \rangle \in \deltaFinal(q, \gamma') \} &&\\
% %
% &\cF_{X_{i \neq 0}} ( X_{0} = \langle q, \gamma, 1 \rangle, X_{i+1} = \gamma' ) = \gamma' &&\\ 
%
&\cF_{X_0} ( X_0 = \langle q, \gamma, d \rangle, X_{d} = \gamma' ) = &&\\
&\qquad \{ \langle q', \gamma'', d' \rangle \mid \langle q', \gamma'', d' \rangle \in \deltaFinal(q, \gamma') \} &&\\
&\cF_{X_{i\neq 0}} ( X_{0} = \langle q, \gamma, d \rangle, X_{i+d} = \gamma' ) = \gamma' &&
\end{flalign*}
\end{definition}

This construction is similar to Definition 7, except the numeration on the tape and the absence of end-marker symbols since the NTM's tape is unbounded. So, in the initial configuration $\vec{v_0}$, $X_0 = \langle q_0, \alpha, 0 \rangle$, where $C_0=\alpha$, $X_{k\in [-i, j]\setminus \{0\}} = C_k$, and $X_{l} = \#$ for $l<-i$ and $l>j$.

For $X_0$, if the head did not move at $t - 1$ (this includes the initial configuration), the value of $X_0$ is updated with the NTM transition $(q', \gamma'', d') = \delta^F(q, \gamma')$ at $t$, where $\gamma' = \gamma$.
If the head moved left at the last transition, then $X_0$ is updated with $(q', \gamma'', d') = \delta^F(q, \gamma')$, where $\gamma'$ is the symbol in $X_{- 1}$ (i.e., the symbol currently under the head).
If the head moved right at the last transition, then $X_0$ is updated with $(q', \gamma'', d') = \delta^F(q, \gamma')$, where $\gamma'$ is the symbol in $X_{1}$.

For $X_{i\neq 0}$, if the head did not move the value of $X_i$ is updated with $\gamma' = \gamma$, i.e., remains the same.
If the head moved left, the value of $X_i$ is updated to be the value in $X_{i - 1}$ (To simplify the presentation, we abuse notation and assume that in the case where $X_{i - 1} = X_0$, $\gamma'$ is the second element of the tuple in $X_0$.)
If the head moved right, the value of $X_i$ is updated to be the value in $X_{i + 1}$.

We say that an NTM causal calculator \textit{accepts} a string $s$ iff its computation tree $\cC$ for the input $s$ contains a sequence of valid transitions $\vec{v_0} \twoheadrightarrow_{\cM^{T^*}} \dots \twoheadrightarrow_{\cM^{T^*}}\vec{v_m}$ from the root node $\vec{v_0}$ to some accepting configuration $\vec{v_m}$ such that $\vec{v_{i}} \in \ConfCM, 0 \leq i \leq m$. 
A configuration $\vec{v}$ is accepting iff $(X_0 = \langle q^F \in F, \gamma, 0 \rangle)\in \vec{v}$, i.e., if it contains the final state.
%A configuration $\vec{v_m} = X_{-(n+1)} \ldots X_{0} \ldots X_{n+1}$ is accepting iff $X_0 = \langle q^F \in F, \rightmarker, 0 \rangle$, i.e., where the state is final and the head is above the right endmarker.

%We prove that for any LBA and any initial configuration of the LBA, the corresponding causal calculator matches computations on each branch step by step. By `matching at step $t$' we mean that the state of the LBA at step $t$ is the same as the state component of $X_0$ at step $t$ and for any cell $i \in \{0,\ldots,n+1\}$, the cell contains symbol $\gamma$ iff the symbol in $X_{c^t(i)}$ is $\gamma$. This will give us the following theorem:

% \mg{change the Thm number}
% \begin{theorem}
% An NTM accepts a string if, and only if, it is accepted by the corresponding causal calculator.
% \end{theorem}
\paragraph{Theorem 3.} \textit{An NTM accepts a string if, and only if, it is accepted by the corresponding causal calculator.}
\begin{proof}
Following the proof of Theorem 1, we will show that there is a bisimulation relation between the nodes of the computation tree of $T^*$, $\cT$, and of computation tree of $\cM^{{T^*}}$, $\cC$. If $\lambda$ and $\vec{v}$ are bisimilar, and $\lambda \twoheadrightarrow_{\cal A} \lambda'$, then there is a transition
$\vec{v} \twoheadrightarrow_{\cM^{T^*}} \vec{v'}$ such that 
$\lambda'$ and $\vec{v'}$ are bisimilar, and vice versa. This will imply that a configuration with an accepting state is reachable in one tree if and only if it is reachable in another.

We say that a transition 
$\lambda \twoheadrightarrow_{T^*} \lambda'$ is by $(q,\gamma,q', \gamma', d)$ if this was the element of $\delta$ used for this transition, and label this edge in the tree by $d$. Similarly, we say that a transition $\vec{v} \twoheadrightarrow_{\cM^{T^*}} \vec{v'}$ is by $(q,\gamma,q', \gamma', d)$ if the value of $X_0$ in $\vec{v}$ is $(q,\gamma,-)$ and the value of $X_0$ in $\vec{v'}$ is $(q',\gamma',d)$, and we label this edge in the tree by $d$. 

Finally, we say that two nodes $\lambda$ and $\vec{v}$ at step $t$ are \emph{$d_1,\ldots, d_t$ bisimilar} if (i) the path from the root node to each of them is labelled $d_1,\ldots, d_t$, (ii) the state in $\lambda$ is the same as  the state component in the value of $X_0$ in $\vec{v}$, (iii) the symbol in $C_k$ in $\lambda$ is the same as the symbol in $X_{k+d_t}$ (the symbol in $X_0 = (-,\gamma,-)$ is $\gamma$).

%\mg{this is the main modification, correspondence of indexes is now simpler} 
%\natasha{I don't understand, t is just the step number? It is possible that on the first step the head goes right and the correspondence becomes %$C_k$ corresponds to $X_{k-1}$, e.g. $C_0$ with left marker corresponds to $X_{-1}$}
%\mg{Yes, it should be $d_t$, not $t$. So, after first step $C_0$ corresponds to $X_1$. If the next transition is again right, then again $C_0$ again corresponds to $X_1$. Note that the current head position of NTM is fixed to 0 and does not change unlike in the case of LBA, so after the first transition $C_0$ does not read the end-marker, but the first symbol}
%\natasha{No it is $X_{-1}$, not $X_1$. Something is wrong here. Shall we omit this proof to be on the safe side (and adjust the text in the paper accordingly?}

Clearly, $\lambda_0$ and $\vec{v_0}$ are $\epsilon$-bisimilar. 

For the inductive step, assume that $\lambda$ and $\vec{v}$ at step $t-1$ are $d_1,\ldots, d_{t-1}$ bisimilar. We need to show that \\
$(\cT \rightarrow \cC)$ if there is a transition from $\lambda$ to $\lambda'$ by $(q,\gamma,q', \gamma', d)$, then there is a transition from $\vec{v}$
to $\vec{v'}$ by $(q,\gamma,q', \gamma', d)$, and $\lambda'$ and $\vec{v'}$ are $d_1,\ldots, d_{t-1},d$ bisimilar;\\
$(\cC \rightarrow \cT)$  if there is a transition from $\vec{v}$ to $\vec{v'}$ by $(q,\gamma,q', \gamma', d)$, then there is a transition from $\lambda$
to $\lambda'$ by $(q,\gamma,q', \gamma', d)$, and $\lambda'$ and $\vec{v'}$ are $d_1,\ldots, d_{t-1},d$ bisimilar.

We prove the back and forth conditions for $d=1$. The proofs for $d=-1$ and $d=0$ are similar.\\
$(\cT \rightarrow \cC)$ Let $\lambda = C_{-i}\dots C_{-1}qC_0C_{1}\dots C_{j}$. By the inductive hypothesis, 
$X_0=(q,C_{-d_{t-1}},d_{t-1})$, and the symbol that the head is reading is in $X_{d_{t-1}}$. Let the transition from $\lambda$ be by some $(q,\gamma,q',\gamma',1)$ where $\gamma$ is in $C_0$. The resulting $\lambda' = C_{-i-1}\dots C_{-1}qC_0=\gamma'C_{1}\dots C_{j-1}$.  The successor of $\vec{v}$ by $(q,\gamma,q',\gamma',1)$ is (i) reachable from the root by the same sequence $d_1,\ldots, d_{t-1},1$ as $\lambda'$,
(ii) has the same state $q'$ due to the equation for $X_0$, (iii)  the symbol that was held in $X_i$ is now held in $X_{i-1}$ because of the equation for $X_i$, which maintains the correspondence with LBA cells since the offset has changed by $d=1$.\\
$(\cC \rightarrow \cT)$ Let $X_0$ in $\vec{v}$ have value $(q,\chi,d_{t-1})$.
Assume that transition in $\cM^{T^*}$ is to $\vec{v'}$ by $(q,\gamma,q',\gamma',1)$ where $\gamma$ is the symbol in $X_{d_{t-1}}$ (which by the inductive hypothesis is the symbol the head is reading in $\lambda$). Since $\vec{v}$ and $\lambda$ have the same state $q$ and the head position in $\lambda$ is over $\gamma$, ${T^*}$ can make a transition to 
$\lambda' = C_{-i}\ldots C_{-1} q'C_0=\gamma' C_{1}\ldots C_{j}$, where $C_k$ from $\lambda'$ is equal to $X_{k+1}$ from $\vec{v'}$.  The argument that $\vec{v'}$ is $d_1,\ldots, d_{t-1},1$ bisimilar to $\lambda'$ is the same as above.
\end{proof}

%\bsl{Monolithic encodng of an LBA}

\paragraph{Single variable encoding of an LBA   }

Recall that the construction of the LBA causal calculator in Section 3 uses twice as many variables as the number of LBA cells. However, one can use even simpler construction with a single variable, whose values represent all possible configurations of the LBA.

%We now show how to construct a causal calculator which mimics a given LBA. 
\begin{definition}[LBA Causal Calculator]\label{def:monolithic_CC}
    Given an LBA ${\cal A} = (Q, q_0, F, \Sigma, \Gamma, \delta)$ 
    with tape of length $n$, an \emph{LBA causal calculator} is a tuple $\cM^{{\cal A}} = (\cS^{\cal A}, \cF^{\cal A})$, with a signature $\cS^{\cal A} = (\cV, \cR, \cD)$, where
\begin{itemize}
    \item $\cV = \{V\}$ where $V$ represents a configuration $S H X_0 X_{1}X_{2}\ldots X_{n} X_{n+1}$
    \item $\cR(V) = Q \times \{0,\ldots,n+1\} \times \{\leftmarker\} \times \Gamma \times \ldots \Gamma \times \{\rightmarker\}$
    \item $\cD(V) = \{V\}$
\end{itemize}

The structural equations $\cF$ are defined as follows:\\ 
$\cF_V (q,j,\leftmarker, \gamma_1,\ldots,\gamma_j,\ldots,\gamma_n,\rightmarker)=$\\
$\{(q',j',\leftmarker, \gamma_1,\ldots,\gamma'_j,\ldots,\gamma_n,\rightmarker) \mid j' = j+d, (q,\gamma_j,q',\gamma'_j, d) \in \delta\}$ 
\end{definition}

Given a configuration $\lambda$ of the LBA $\cal A$, we denote a corresponding configuration of $\cM^{\cal A}$ as $(V=\lambda)$. For any initial configuration $\lambda_0$, let $\cT_{\cA}$ and $\cC_{\cM^{\cA}}$ be computation trees for the LBA $\cA$ and the causal calculator $\cM^A$ on inputs $\lambda_0$ and $\vec{v_0} = (V=\lambda_0)$ respectively. It is now straightforward to prove that the construction of $\cM^{\cal A}$ guarantees that $(V=\lambda)\twoheadrightarrow_{\cM^{\cA}} (V=\lambda') \in \cC_{\cM^{\cA}}$ iff $\lambda \twoheadrightarrow_{\cal A}\lambda' \in \cT_{\cA}$. Thus, $\cT_{\cA}$ is isomorphic to $\cC_{\cM^{\cA}}$.

However, this construction does not allow us to intervene on e.g. the symbol read by the head or on the next transition of the LBA.

%\mg{How can we justify that the construction in Section 3 is better?}

%It is now straightforward to prove the following theorem:

% \begin{theorem}\label{thm:LBA_simulation}
%    Any LBA can be simulated by a TSEM.
% \end{theorem}
% \begin{proof}
%     Given a configuration $\lambda$ of the LBA $\cal A$, we denote a corresponding configuration of $\cM^{\cal A}$ as $(V=\lambda)$. For any initial configuration $\lambda_0$, let $\cT_{\cA}$ and $\cT_{\cM^{\cA}}$ be computation trees for the LBA $\cA$ and the causal calculator $\cM^A$ on inputs $\lambda_0$ and $\vec{v_0} = (V=\lambda_0)$ respectively. The construction of $\cM^{\cal A}$ guarantees that $(V=\lambda)\twoheadrightarrow_{\cM^{\cA}} (V=\lambda') \in \cT_{\cM^{\cA}}$ iff $\lambda \twoheadrightarrow_{\cal A}\lambda' \in \cT_{\cA}$. Thus, $\cT_{\cA} = \cT_{\cM^{\cA}}$.
% \end{proof}

%\clearpage
%\input{CoverLetter}
\end{document}